\newenvironment{proof}{\par\noindent{\bf Proof:}}{\mbox{}\hfill$\qed$\\}
\newcommand{\ignore}[1]{ }
\newcounter{rem}
\def\qed{\hbox{\rlap{$\sqcap$}$\sqcup$}}
\begin{document}

\title{Vertex Fault-Tolerant Spanners for Weighted Points in Polygonal Domains}
\titlerunning{Fault-tolerant spanners for weighted points in polygonal domains}

\author{
R. Inkulu\inst{1}
\and
Apurv Singh\inst{1}
}

\authorrunning{R. Inkulu, A. Singh}

\institute{
Department of Computer Science and Engineering\\
IIT Guwahati, India\\
\email{\{rinkulu,apursingh\}@iitg.ac.in}
}

\maketitle

\pagenumbering{arabic}
\setcounter{page}{1}

\begin{abstract}
Given a set $S$ of $n$ points, a weight function $w$ to associate a non-negative weight to each point in $S$, a positive integer $k \ge 1$, and a real number $\epsilon > 0$, we devise the following algorithms to compute a $k$-vertex fault-tolerant spanner network $G(S, E)$ for the metric space induced by the weighted points in $S$:
(1) When the points in $S$ are located in a simple polygon, we present an algorithm to compute $G$ with multiplicative stretch $\sqrt{10}+\epsilon$, and the number of edges in $G$ (size of $G$) is $O(k n (\lg{n})^2)$.
(2) When the points in $S$ are located in the free space of a polygonal domain $\cal P$ with $h$ number of obstacles, we present an algorithm to compute $G$ with multiplicative stretch $6+\epsilon$ and size $O(\sqrt{h} k n(\lg{n})^2)$.
(3) When the points in $S$ are located on a polyhedral terrain, we devise an algorithm to compute $G$ with multiplicative stretch $6+\epsilon$ and size $O(k n (\lg{n})^2)$.
\end{abstract}

\begin{keywords}
Computational Geometry, Geometric Spanners, Approximation Algorithms.
\end{keywords}

\section{Introduction}
\label{sect:intro}

In designing geometric networks on a given set of points in a metric space, it is desirable for the network to have short paths between any pair of nodes while being sparse with respect to the number of edges.
For a set $S$ of $n$ points in a metric space $\cal M$, a network on $S$ is an undirected graph $G$ with vertex set $S$ and an edge set $E$, where every edge of $G$ is associated with a weight. 
The distance in $G$ between any two vertices $p$ and $q$ of $G$, denoted by $d_G(p, q)$, is the length of a shortest (that is, a minimum length) path between $p$ and $q$ in $G$.
For a real number $t \ge 1$, the graph $G$ is called a {\it $t$-spanner} of points in $S$ if for every two points $p, q \in S$, $d_G(p, q)$ is at most $t$ times the distance between $p$ and $q$ in $\cal M$.
The smallest $t$ for which $G$ is a $t$-spanner of points in $S$ is called the {\it stretch factor} of $G$, and the number of edges of $G$ is called its size.
Given a set $S$ of points in the plane, each associated with a non-negative weight, and a positive integer $k$, in this paper, we study computing an edge-weighted geometric graph $G$ that is a {\it vertex fault-tolerant $t$-spanner} for the metric space induced by the weighted points in $S$;
that is, for any set $S' \subseteq S$ of cardinality at most $k$, the graph $G \setminus S'$ is a $t$-spanner for the metric space induced by the weighted points in $S-S'$.

\subsection*{Previous Work}

Peleg and Sch\"{a}ffer~\cite{journals/jgt/PelegSchaffer89} and Chew~\cite{journals/jcss/Chew89} introduced spanner networks.
Alth\"{o}fer et~al.~\cite{journals/dcg/AlthoferDDJS93} studied sparse spanners on edge-weighted graphs with edge weights obeying the triangle-inequality.
The text by Narasimhan and Smid \cite{books/compgeom/narsmid2007}, and the handbook chapters by Eppstein \cite{hb/cg/Epp99} and Gudmundsson and Knauer~\cite{hb/apprxheu/GudKnau07} detail various results on Euclidean spanners, including a $(1+\epsilon)$-spanner for the set $S$ of $n$ points in $\mathbb{R}^d$ that has $O(\frac{n}{\epsilon^{d-1}})$ edges, for any $\epsilon > 0$.

Apart from the small number of edges, spanners with additional properties such as small weight, bounded degree, small diameter, planar network, etc., were also considered.
The significant results in optimizing these parameters in geometric spanner network design include 
spanners of low degree \cite{journals/cgta/ABCGHSV08,conf/cccg/CarmiChai10,journals/cgta/BCCCKL13}, 
spanners of low weight \cite{journals/algorithmica/BCFMS10,journals/ijcga/DasNara97,journals/siamjc/GudmudLevcoNar02}, 
spanners of low diameter \cite{conf/focs/AryaMS94,journals/cgta/AryaMountSmid99}, 
planar spanners \cite{conf/esa/ArikatiCCDSZ96,journals/jcss/Chew89,conf/optalgo/DasJoseph89,conf/wads/KeilGutwin89}, 
spanners of low chromatic number \cite{journals/cgta/BCCMSZ09}, 
fault-tolerant spanners \cite{journals/dcg/ABFG09,journals/dcg/CzumajZ04,journals/algorithmica/LevcopoulosNS02,conf/wads/Lukovszki99,journals/corr/KapLi13,conf/stoc/Solomon14,conf/caldam/Inkulu19b}, 
low power spanners \cite{journals/wireless/Karim11,conf/infocom/SegalShpu10,journals/jco/WangLi06}, 
kinetic spanners \cite{journals/dcg/AbamBerg11,journals/cgta/ABG10}, 
angle-constrained spanners \cite{journals/jocg/CarmiSmid12}, 
and combinations of these \cite{conf/stoc/AryaDMSS95,journals/algorithmica/AryaS97,journals/algorithmica/BFRV18,journals/algorithmica/BoseGudSmid05,journals/ijcga/BoseSmidXu09,journals/corr/CarmiChait10}.
For the case of spanners in a metric space with bounded doubling metric, a few results are given in \cite{conf/stoc/Talwar04}.

As observed in Abam et~al.,~\cite{journals/algorithmica/AbamBFGS11}, the cost of traversing a path in a network is not only determined by the lengths of edges along the path, but also by the delays occurring at the vertices on the path.
The result in \cite{journals/algorithmica/AbamBFGS11} models these delays by associating non-negative weights to points. 
Let $S$ be a set of $n$ points in $\mathbb{R}^d$. 
For every $p \in S$, let $w(p)$ be the non-negative weight associated to $p$. 
The following weighted distance function $d_w$ on $S$ defining the metric space $(S, d_w)$ is considered by Abam et~al. in \cite{journals/algorithmica/AbamBFGS11}, and by Bhattacharjee and Inkulu in \cite{conf/caldam/Inkulu19b}:
for any $p, q \in S$, $d_w(p, q)$ is equal to $w(p) + |pq| + w(q)$ if $p \ne q$; otherwise, $d_w(p, q)$ is equal to $0$.

Recently, Abam et~al.~\cite{conf/soda/AbamBS17} showed that there exists a $(2 + \epsilon)$-spanner with a linear number of edges for the metric space $(S, d_w)$ that has bounded doubling dimension.
And, \cite{journals/algorithmica/AbamBFGS11} gives a lower bound on the stretch factor, showing that $(2+\epsilon)$ stretch is nearly optimal.
Bose et~al. \cite{conf/swat/BoseCC08} studied the problem of computing a spanner for a set of weighted points in ${\mathbb R}^2$, while defining the distance $d_w$ between any two distinct points $p, q \in S$ as $d(p,q) - w(p) - w(q)$. 
Under the assumption the distance between any pair of points is non-negative, they showed the existence of a $(1 + \epsilon)$-spanner with $O(\frac{n}{\epsilon})$ edges. 

A set of $h \ge 0$ disjoint simple polygonal holes (obstacles) contained in a simple polygon $P$ is the {\it polygonal domain} ${\cal P}$.
Note that when $h = 0$, the polygonal domain $\cal P$ is essentially a simple polygon.
The free space $\cal D$ of a polygonal domain ${\cal P}$ is defined as the closure of $P$ excluding the union of the interior of polygons contained in $P$.
Note that the free space of a simple polygon is its closure.
A shortest path between any two points $p$ and $q$ is a path in $\cal D$ whose length (in Euclidean metric) is less than or equal to the length of any path between $p$ and $q$ located in $\cal D$.
The distance along any shortest path between $p$ and $q$ is denoted by $d_\pi(p, q)$.
If the line segment joining $p$ and $q$ is in $\cal D$, then the Euclidean distance between $p$ and $q$ is denoted by $d(p, q)$, i.e., in this case, $d_\pi(p, q)$ is equal to $d(p, q)$.

Given a set $S$ of $n$ points in the free space $\cal D$ of $\cal P$, computing a geodesic spanner of $S$ is considered in Abam et~al.~\cite{conf/compgeom/AbamAHA15}.
The result in \cite{conf/compgeom/AbamAHA15} showed that for the metric space $(S, d_\pi)$, for any constant $\epsilon > 0$, there exists a $(5+\epsilon)$-spanner of size $O(\sqrt{h}n (\lg n)^{2})$. 
Further, when the input points are located in a simple polygon, for any constant $\epsilon > 0$, \cite{conf/compgeom/AbamAHA15} devised an algorithm to compute a $(\sqrt{10}+\epsilon)$-spanner with $O(n (\lg n)^{2})$ edges.

A polyhedral terrain $\cal T$ is the graph of a piecewise linear function $f: D \rightarrow {\mathbb R}^3$, where $D$ is a convex polygonal region in the plane.
Given a set $S$ of $n$ points on a polyhedral terrain $\mathcal{T}$, the geodesic distance between any two points $p, q \in S$ is the distance along any shortest path on the terrain between $p$ and $q$.
The spanner for points on a terrain with respect to geodesic distance on terrain is a geodesic spanner.
Unlike metric space induced by Euclidean distance among points in ${\mathbb R}^d$, the metric space induced by points on a terrain does not have a bounded doubling dimension.
Hence, geometric spanners for points on polyhedral terrains has unique characteristics and interesting to study.
The algorithm in \cite{conf/soda/AbamBS17} proved that for a set of unweighted points on any polyhedral terrain, for any constant $\epsilon > 0$, there exists a $(2 + \epsilon)$-geodesic spanner with $O(n \lg n)$ edges.

A graph $G(S, E)$ is a {\it $k$-vertex fault-tolerant $t$-spanner}, denoted by $(k, t)$-VFTS, for a set $S$ of $n$ points in $\mathbb{R}^d$ whenever for any subset $S'$ of $S$ with size at most $k$, the graph $G \setminus S'$ is a $t$-spanner for the points in $S \setminus S'$.
The algorithms given in Levcopoulos et~al., \cite{journals/algorithmica/LevcopoulosNS02}, Lukovszki \cite{conf/wads/Lukovszki99}, and Czumaj and Zhao \cite{journals/dcg/CzumajZ04} compute a $(k, t)$-VFTS for the set $S$ of points in $\mathbb{R}^d$.
These algorithms are also presented in \cite{books/compgeom/narsmid2007}.
Levcopoulos et~al.~\cite{journals/algorithmica/LevcopoulosNS02} devised an algorithm to compute a $(k, t)$-VFTS of size $O(\frac{n}{(t-1)^{(2d -1)(k+1)}})$ in $O(\frac{n \lg{n}}{(t-1)^{4d -1}} + \frac{n}{(t-1)^{(2d -1)(k+1)}})$ time,
and another algorithm to compute a $(k, t)$-VFTS with $O(k^{2}n)$ edges in $O(\frac{k n \lg n}{(t-1)^{d}})$ time. 
The result in \cite{conf/wads/Lukovszki99} gives an algorithm to compute a $(k, t)$-VFTS of size $O(\frac{k n}{(t-1)^{d-1}})$ in $O(\frac{1}{(t-1)^{d}}(n \lg^{d-1} n \lg k + k n \lg \lg n))$ time. 
The algorithm in \cite{journals/dcg/CzumajZ04} computes a $(k, t)$-VFTS having $O(\frac{k n}{(t-1)^{d-1}})$ edges in $O(\frac{1}{(t-1)^{d-1}}(k n \lg^{d} n + nk^{2} \lg k))$ time with total weight of edges upper bounded by $O(\frac{k^{2} \lg n}{(t-1)^{d}})$ multiplicative factor of the weight of a minimum spanning tree of the given set of points. 

\subsection*{Our results}

In \cite{conf/caldam/Inkulu19b}, Bhattacharjee and Inkulu devised the following algorithms: one for computing a $(k, 4+\epsilon, w)$-VFTSWP when the input points are in $\mathbb{R}^d$, and the other for computing a $(k, 4+\epsilon, w)$-VFTSWP when the given points are in a simple polygon.
Further, in \cite{conf/cocoon/BInkulu19}, Bhattacharjee and Inkulu extended these algorithms to compute a $(k, 4+\epsilon, w)$-VFTSWP when the points are in a polygonal domain and when the points are located on a terrain.
In this paper, we show the following results for computing a $(k, t, w)$-VFTSWP:

\begin{itemize}
\item[*]
Given a simple polygon $\cal P$, a set $S$ of $n$ points located in $\cal P$, a weight function $w$ to associate a non-negative weight to each point in $S$, a positive integer $k$, and a real number $0 < \epsilon \le 1$, we present an algorithm to compute a $(k, \sqrt{10}+\epsilon, w)$-VFTSWP that has size $O(k n(\lg{n})^2)$.  
(Refer to Theorem~\ref{thm:simppoly}.)
The stretch factor of the spanner is improved from the result in \cite{conf/caldam/Inkulu19b}, and the number of edges is an improvement over the result in \cite{conf/caldam/Inkulu19b} when $(\lg{n}) < \frac{1}{\epsilon^2}$.
Note that \cite{conf/caldam/Inkulu19b} devised an algorithm for computing a $(k, 4+\epsilon, w)$-VFTSWP with size $O(\frac{k n}{\epsilon^2} \lg{n})$.

\medskip

\item[*]
Given a polygon domain $\cal P$ with $h$ number of obstacles (holes), a set $S$ of $n$ points located in the free space $\cal D$ of $\cal P$, a weight function $w$ to associate a non-negative weight to each point in $S$, a positive integer $k$, and a real number $0 < \epsilon \le 1$, we present an algorithm to compute a $(k, 6+\epsilon, w)$-VFTSWP with size $O(\sqrt{h+1} k n(\lg{n})^2)$.  
(Refer to Theorem~\ref{thm:polydom}.)
Though the stretch factor of the VFTSWP given in \cite{conf/cocoon/BInkulu19} is $(4+\epsilon)$, its size is $O(\frac{\sqrt{h+1} k n}{{\epsilon^2}} (\lg{n}))$.

\medskip

\item[*]
Given a polyhedral terrain $\cal T$, a set $S$ of $n$ points located on $\cal T$, a weight function $w$ to associate a non-negative weight to each point in $S$, a positive integer $k$, and a real number $0 < \epsilon \le 1$, we present an algorithm to compute a $(k, 6+\epsilon, w)$-VFTSWP with size $O(n k (\lg{n})^2)$.
(Refer to Theorem~\ref{thm:terr2}.)
Analogous to the points in the free space of a polygonal domain, the stretch factor of the VFTSWP given in \cite{conf/cocoon/BInkulu19} is $(4+\epsilon)$, however its size is $O(\frac{n k}{{\epsilon^2}} (\lg{n}))$.
\end{itemize}

The approach in achieving these improvements is different from results in \cite{conf/caldam/Inkulu19b,conf/cocoon/BInkulu19}.
Instead of clustering (like in \cite{conf/caldam/Inkulu19b,conf/cocoon/BInkulu19}), following \cite{conf/compgeom/AbamAHA15}, our algorithm uses the $s$-semi-separated pair decomposition ($s$-SSPD) of points projected on line segments together with the divide-and-conquer applied to input points.
For a set $Q$ of $n$ points in $\mathbb{R}^d$, a {\it pair decomposition} \cite{conf/focs/Varada98,journals/cgta/AbamHarpeled12} of $Q$ is a set of pairs of subsets of $Q$, such that for every pair of points of $p, q \in Q$, there exists a pair $(A, B)$ in the decomposition such that $p \in A$ and $q \in B$.
Given a pair decomposition $\{\{A_1, B_1\}, \ldots, \{A_s, B_s\}\}$ of a point set, its weight is defined as $\sum_{i=1}^s (|A_i| + |B_i|)$.
For a set $Q$ of $n$ points in $\mathbb{R}^d$, a {\it $s$-semi-separated pair decomposition (s-SSPD)} of $Q$ is a pair decomposition of $Q$ such that for every pair $(A, B)$, the distance between $A$ and $B$ (i.e., the distance of their minimum enclosing disks) is greater than or equal to $s$ times the minimum of the radius of $A$ and the radius of $B$. 
(The radius of a point set $X$ is the radius of the smallest ball enclosing all the points in $X$.)
The SSPD has the advantage of low weight as compared to well-known well-separated pair decomposition.

\subsection*{Terminology}

Recall the Euclidean distance between two points $p$ and $q$ is denoted by $|pq|$ and the geodesic Euclidean distance between two points $p, q$ located in the free space of a polygonal domain is denoted by $d_\pi(p, q)$.
Here, $\pi$ denotes a shortest path between $p$ and $q$ located in the free space of the polygonal domain.
(Further, we would like to note the following terms are defined in previous subsection: polygonal domain, free space of a polygonal domain, and a geodesic shortest path between two points.)
We note that if the line segment joining $p$ and $q$ does not intersect any obstacle, $d_\pi(p, q)$ is equal to $|pq|$, otherwise $d_\pi(p, q)$ is the distance along a geodesic shortest path $\pi$ between $p$ and $q$.
The length of a shortest path between $p$ and $q$ in a graph $G$ is denoted by $d_G(p, q)$.
For a set $S'$ of vertices of $G$ and any two vertices $p, q$ of $G$ not belonging to $S'$, the distance along a shortest path between $p$ and $q$ in graph $G \setminus S'$ is denoted by $d_{G \backslash S'}(p, q)$.
As in \cite{journals/algorithmica/AbamBFGS11} and in \cite{conf/caldam/Inkulu19b}, the function $d_w$ is defined on a set $S$ of points as follows: for any $p, q \in S$, $d_w$ is equal to $w(p) + d_\pi(p, q) + w(q)$ if $p \ne q$; otherwise, $d_w(p, q)$ is equal to $0$. 

For any point $p$ and any line segment $\ell$, let $d$ be the geodesic (Euclidean) distance between $p$ and $\ell$.
Then any point $p_\ell \in \ell$ is called a {\it geodesic projection of $p$ on $\ell$} whenever the geodesic (Euclidean) distance between $p$ and $p_\ell$ is $d$.
We note that geodesic distance between two points is the Euclidean distance of a geodesic path located in the free space in case of a polygonal domain, and it is the distance along a geodesic path located on the polyhedral terrain. 
We denote a geodesic projection of a point $p$ on a line segment $\ell$ with $p_\ell$.

Recall that, for any set $S$ of points, any graph $G$ with vertex set $S$ and each of its edges associated with a non-negative weight is called a {\it $t$-spanner} of points in $S$ whenever $d_\pi(p, q) \le d_G(p, q) \le t \cdot d_\pi(p, q)$ for every two points $p, q \in S$ and a real number $t \ge 1$.
The smallest $t$ for which $G$ is a $t$-spanner of $S$ is called the {\it stretch factor} of $G$. 
The number of edges in $G$ is known as the size of $G$. 
A graph $G(S, E)$ is a {\it $k$-vertex fault-tolerant $t$-spanner}, denoted by $(k, t)$-VFTS, for a set $S$ of $n$ points whenever for any subset $S'$ of $S$ with size at most $k$, the graph $G \setminus S'$ is a $t$-spanner for the points in $S \setminus S'$.
For a real number $t > 1$ and a set $S$ of weighted points, a graph $G(S, E)$ is called a {\it $t$-spanner for weighted points in $S$} whenever $d_w(p, q) \le d_G(p, q) \le t \cdot d_w(p, q)$ for every two points $p$ and $q$ in $S$.
Given a set $S$ of points, a function $w$ to associate a non-negative weight to each point in $S$, an integer $k \geq 1$, and a real number $t > 0$, a geometric graph $G$ is called a {\it $(k, t, w)$-vertex fault-tolerant spanner for weighted points} in $S$, denoted by $(k, t, w)$-VFTSWP, whenever for any set $S' \subset S$ with cardinality at most $k$, the graph $G \setminus S'$ is a $t$-spanner for the weighted points in $S \setminus S'$.
Note that every edge $(p, q)$ in $G$ corresponds to a shortest geodesic path between two points $p, q \in S$.
In addition, the weight associated with any edge $(p, q)$ of $G$ is the distance $d_\pi(p,q)$ along a geodesic shortest path between $p$ and $q$.

\bigskip

Section~\ref{sect:simppoly} presents an algorithm to compute a $(k, \sqrt{10}+\epsilon, w)$-VFTSWP when the weighted input points are in a simple polygon. 
When the points are located in the free space of a polygonal domain, an algorithm for computing a $(k, 6+\epsilon, w)$-VFTSWP is detailed in Section~\ref{sect:polydom}.
Section~\ref{sect:terr} details an algorithm to compute a $(k, 6+\epsilon, w)$-VFTSWP when the points associated with non-negative weights are located on a polyhedral terrain.

\section{Vertex fault-tolerant spanner for weighted points in a simple polygon}
\label{sect:simppoly}

Given a simple polygon $\cal P$, a set $S$ of $n$ points located in $\cal P$, a weight function $w$ to associate a non-negative weight to each point in $S$, a positive integer $k$, and a real number $\epsilon > 0$, we devise an algorithm to compute a geodesic $(k, \sqrt{10}+\epsilon, w)$-VFTS for the set $S \backslash S'$ of weighted points.
That is, if for any set $S' \subset S$ with cardinality at most $k$, the graph $G \backslash S'$ is a $(\sqrt{10}+\epsilon)$-spanner for the set $S \backslash S'$ of weighted points.
To remind, a graph $G(S'', E'')$ is a $t$-spanner for the set $S''$ of weighted points located in a simple polygon $\cal P''$ whenever the distance between any two points $p, q \in S''$ in $G$ is upper bounded by $t \cdot (w(p) + d_\pi(p, q) + w(q))$.
Here, $d_\pi(p, q)$ is the geodesic distance between points $p$ and $q$ in the simple polygon $\cal P''$.

Using polygon-cutting theorem in \cite{conf/jcdcg/BoseCKKM98}, we partition $\cal P$ into two simple polygons $\cal P', \cal P''$ with a line segment $\ell \in \cal P$.
We project all the points in $S$ on to $\ell$; let $S_\ell$ be the set of points projected on to $\ell$.
We compute a $\frac{4}{\epsilon}$-SSPD $\cal S$ of points in $S_\ell$, and include edges into $G$ based on $\cal S$.
We recursively process points in simple polygons $\cal P'$ and $\cal P''$.
The details are in Algorithm~\ref{algo:vftswpsimppoly} listed below. 
Our algorithm extends the algorithm in \cite{conf/compgeom/AbamAHA15} to the case of input points associated with non-negative weights. 

\begin{algorithm}
	\caption{VFTSWPSimplePolygon($\cal P$, S).}
	\label{algo:vftswpsimppoly}

	\SetAlgoLined
	\SetKwInOut{Input}{Input}
	\SetKwInOut{Output}{Output}
	\Input{A simple polygon $\cal P$, a set $S$ on $n$ points located in $\cal P$, a weight function $w$ that associates a non-negative weight to each point in $S$, an integer $k \geq 1$, and a real number $0 < \epsilon \leq 1$.}
	\Output{A $(k, \sqrt{10}+\epsilon, w)$-VFTSWP $G$.}

	\BlankLine
    
	If $|S| \le 1$ then return. \\
    
	\BlankLine
    
	By using the polygon cutting theorem in \cite{conf/jcdcg/BoseCKKM98}, we partition $\cal P$ into two simple polygons ${\cal P}', {\cal P}''$ with a line segment $\ell$ joining two points on $\partial{\cal P}$ such that either of the sub-polygons contains at most two-thirds of the points in $S$. 
	Let $S'$ be the set of points in $\cal P'$, and let $S''$ be the set of points in $\cal P''$.
	(Without loss of generality, for any point $p \in S$, if $p$ is located on $\ell$, then we assume $p \in S'$ and $p \notin S''$.) \\

	\BlankLine
    
	For each point $p \in S$, compute a geodesic projection $p_\ell$ of $p$ on $\ell$.
	Let $S_\ell$ be the set of points resulting from the geodesic projection of each point in $S$ on $\ell$. \\

	\BlankLine
    
	Using the algorithm in \cite{journals/dcg/ABFG09}, compute a $\frac{4}{\epsilon}$-SSPD $\cal S$ for the points in $S_\ell$. \\

	\BlankLine

	IncludeEdgesUsingSSPD(${\cal S}, G$). \scriptsize{} (Refer to Algorithm~\ref{algo:addedgessspd}.) \normalsize{} \\

	\BlankLine

	VFTSWPSimplePolygon(${\cal P'}, S'$). \\

	\BlankLine

	VFTSWPSimplePolygon(${\cal P''}, S''$).

\end{algorithm}

\begin{algorithm}
	\caption{IncludeEdgesUsingSSPD(${\cal S}, G$)}
	\label{algo:addedgessspd}

	\SetAlgoLined
	\SetKwInOut{Input}{Input}
	\SetKwInOut{Output}{Output}
	\Input{An $s$-SSPD $\cal S$, and a graph $G$.}
	\Output{Based on $\cal S$, edges are added to $G$.}
	\BlankLine

	\ForEach{pair $(A,B)$ in $\cal S$}{
		\scriptsize{} (In the following, without loss of generality., we assume $radius(A) \leq radius(B)$.) \normalsize{} \\
			\BlankLine

		\uIf{$|A| < k+1$}{
			For every $p \in A$ and $q \in B$, add edge $(p,q)$ to $G$.
		}
		\Else{
			For every point $p$ in $A$, associate a weight $w(p) + d_\pi(p, p_\ell)$.  
			(Note that $d_\pi(p, p_\ell)$ is the geodesic distance between $p$ and $p_\ell$.)
			Let $w'$ be the resultant restricted weight function. \\

			\BlankLine

			With ties broken arbitrarily, select any $k+1$ minimum weighted points in $A$, with respect to weights associated via $w'$; let $A'$ be this set of points. \\

			\BlankLine

			For every $p \in A \cup B$ and $q \in A'$, add edge $(p,q)$ to $G$.
		}
	}
\end{algorithm}
        
Essentially, edges added to $G$ in Algorithm~\ref{algo:vftswpsimppoly} help in achieving the vertex fault-tolerance, i.e., maintaining $\sqrt{10}+\epsilon$ stretch even after removing any $k$ points in $S$.
We restate the following lemma from \cite{conf/compgeom/AbamAHA15}, which is useful in the analysis of Algorithm~\ref{algo:vftswpsimppoly}.

\begin{lemma}
\label{lem4}
(from \cite{conf/compgeom/AbamAHA15})
Suppose $ABC$ is a right triangle with $\angle CAB = \frac{\pi}{2}$.
For some point $D$ on line segment $AC$, let $H$ be a $y$-monotone path between $B$ and $D$ such that the region bounded by $AB, AD$, and $H$ is convex.
Then, $3d(H) + d(D,C) \leq \sqrt{10}d(B,C)$, where $d(., .)$ denotes the Euclidean length.
\end{lemma}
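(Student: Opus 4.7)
I would place $A$ at the origin with $AB$ along the positive $x$-axis and $AC$ along the positive $y$-axis; thus $B = (b, 0)$, $C = (0, c)$, and $D = (0, d)$ for some $0 \le d \le c$. Since $H$ is $y$-monotone from $B$ to $D$, it is the graph $x = X(y)$ of a function on $[0, d]$ satisfying $X(0) = b$, $X(d) = 0$, $X \ge 0$, and convexity of the region bounded by $AB$, $AD$, $H$ translates exactly to $X$ being concave on $[0, d]$. In the geometric setting in which this lemma is applied in \cite{conf/compgeom/AbamAHA15}, $H$ also lies inside triangle $ABC$, which gives the extra constraint $X(y) \le b(1 - y/c)$; this is what will prevent L-shaped paths that would otherwise falsify the inequality.

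The core step is the extremal estimate
\begin{equation*}
d(H) \;=\; \int_{0}^{d}\sqrt{1 + X'(y)^2}\, dy \;\le\; \tfrac{d}{c}\,d(B,C) \;+\; b\Bigl(1 - \tfrac{d}{c}\Bigr).
\end{equation*}
To see this, set $s(y) := -X'(y)$; concavity of $X$ makes $s$ non-decreasing, the constraint $X(y) \le b(1-y/c)$ forces $s(y) \ge b/c$ for all $y \in [0,d]$, and $\int_{0}^{d} s(y)\, dy = b$. Since $\sqrt{1+s^2}$ is convex and increasing on $[0,\infty)$, a rearrangement/Jensen-type argument shows that $\int_0^d \sqrt{1 + s^2}\,dy$ over such $s$ is maximized by concentrating the excess mass of $s$ (above $b/c$) at the right endpoint $y = d$: in the limit, $s \equiv b/c$ on $[0, d)$ with a Dirac-type spike at $y = d$ accounting for the remaining $x$-distance $b(1 - d/c)$. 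Evaluating the length in this limit yields exactly the claimed upper bound.

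Plugging this bound into $3\,d(H) + d(D,C)$, and writing $L := d(B,C) = \sqrt{b^2 + c^2}$ together with $u := 1 - d/c \in [0, 1]$ (so that $d(D,C) = cu$), the target inequality rearranges to
\begin{equation*}
u\bigl((3b + c) - 3L\bigr) \;\le\; (\sqrt{10} - 3)\, L.
\end{equation*}
The right-hand side is non-negative because $\sqrt{10} > 3$. If $(3b + c) - 3L \le 0$ the bound is immediate; otherwise, $u \le 1$ reduces it to $3b + c \le \sqrt{10}\, L$, which is the Cauchy--Schwarz inequality $(3b + c)^2 \le (3^2 + 1^2)(b^2 + c^2) = 10\, L^2$, with equality iff $b = 3c$.

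The main obstacle will be the extremal step. It is essential to invoke the bound $X(y) \le b(1-y/c)$ coming from $H \subset$ triangle $ABC$, since $y$-monotonicity and region-convexity alone admit L-shaped profiles of length $b + d$ that would violate the claimed inequality. Once the hypotenuse constraint is in place, the rearrangement/Jensen step for concentrating excess slope at $y = d$ is routine, and the closing Cauchy--Schwarz reduction is clean.
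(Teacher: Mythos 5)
There is no in-paper proof to compare you against: the paper states this lemma verbatim as imported from \cite{conf/compgeom/AbamAHA15} and uses it as a black box inside the proof of Lemma~\ref{lem5}. Judged on its own, your argument is essentially correct, and your key observation is right and worth stating explicitly: as transcribed here the lemma is literally false, since a (near-)rectangular $y$-monotone path from $B=(b,0)$ up to height $d$ and then across to $D=(0,d)$ bounds a convex region but has length about $b+d$, and for $b=c=d$ this gives $3d(H)+d(D,C)\approx 6b>\sqrt{20}\,b=\sqrt{10}\,d(B,C)$. The missing hypothesis --- that $H$ lies inside triangle $ABC$, which is present in the source formulation and holds where the lemma is applied (there $H$ is a piece of a geodesic contained in the triangle $p'h_{p'}r$, resp.\ $q'h_{q'}r$) --- is exactly what you reinstate through the constraint $X(y)\le b(1-y/c)$, and with it your chain of reductions (the length bound $d(H)\le \frac{d}{c}\,d(B,C)+b(1-\frac{d}{c})$, then $u:=1-d/c$, then $(3b+c)^2\le 10(b^2+c^2)$ by Cauchy--Schwarz) is sound.

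Two refinements to the extremal step. First, the rearrangement/Dirac-limit sketch can be replaced by a one-line rigorous bound: since $s\mapsto\sqrt{1+s^2}$ is $1$-Lipschitz, $\sqrt{1+s^2}\le\sqrt{1+(b/c)^2}+(s-b/c)$ for $s\ge b/c$, and integrating over $[0,d]$ with $\int_0^d s\,dy\le b$ gives exactly $d(H)\le \frac{d}{c}\,d(B,C)+b(1-\frac{d}{c})$. Second, and cleaner still, the convex region $R$ bounded by $AB$, $AD$, $H$ satisfies $R\subseteq$ triangle $ABC$ and $R\subseteq\{0\le y\le d\}$, hence $R$ is contained in the convex quadrilateral $ABED$, where $E$ is the point of $BC$ at the height of $D$; monotonicity of perimeter under inclusion of convex bodies then yields $d(H)\le d(B,E)+d(E,D)=\frac{d}{c}\,d(B,C)+b\bigl(1-\frac{d}{c}\bigr)$ with no differentiability or parametrization issues --- note that your graph parametrization $x=X(y)$ with $X(d)=0$ tacitly excludes a horizontal piece of $H$ at height $d$, which the convexity-plus-containment hypotheses do allow; the quadrilateral argument (or a small adjustment of your integral computation) covers that case as well.
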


We note that a $y$-monotone path is a path whose intersection with any line perpendicular to $y$-axis is connected. 
In the following lemma, we prove the graph $G$ constructed in Algorithm~\ref{algo:vftswpsimppoly} is indeed a $(k, \sqrt{10}+\epsilon, w)$-VFTSWP for the set $S$ of points located in $\cal P$.

\begin{lemma}
\label{lem5}
The spanner $G$ computed by Algorithm~\ref{algo:vftswpsimppoly} is a geodesic $(k, \sqrt{10}+\epsilon, w)$-VFTSWP for the set $S$ of points located in $\cal P$.
\end{lemma}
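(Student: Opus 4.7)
The plan is to induct on the depth of the recursion in Algorithm~\ref{algo:vftswpsimppoly}. Fix an arbitrary $S'\subseteq S$ with $|S'|\le k$ and any $p,q\in S\setminus S'$; the goal is to exhibit a surviving path in $G\setminus S'$ of length at most $(\sqrt{10}+\epsilon)\,d_w(p,q)$. At the current recursion level either (i) $p$ and $q$ sit in the same sub-polygon (${\cal P}'$ or ${\cal P}''$), in which case the restriction of $S'$ to that sub-polygon still has size at most $k$ and the inductive hypothesis on the corresponding recursive call supplies the path immediately, or (ii) $p$ and $q$ are separated by the cut $\ell$, so every $p$--$q$ geodesic in ${\cal P}$ must cross $\ell$ and a short surviving path has to be assembled from edges added at the current level via \textsc{IncludeEdgesUsingSSPD}.

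In case (ii), let $(A,B)$ be the SSPD pair of ${\cal S}$ with $p_\ell\in A$ and $q_\ell\in B$, assumed without loss of generality to satisfy $\mathrm{radius}(A)\le\mathrm{radius}(B)$. If $|A|<k+1$, Algorithm~\ref{algo:addedgessspd} adds every edge of $A\times B$ to $G$; in particular $(p,q)\in G\setminus S'$ gives the trivial bound $d_\pi(p,q)\le d_w(p,q)$. Otherwise $A'\subseteq A$ is the set of $k+1$ points of $A$ minimizing $w'(x):=w(x)+d_\pi(x,x_\ell)$. If $p\in A'$ I take $r:=p$ so that $(p,q)$ is again a direct edge. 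Otherwise $w'(r)\le w'(p)$ holds for every $r\in A'$, and $|A'|>|S'|$ forces some $r\in A'\setminus S'$; the algorithm adds both $(p,r)$ and $(r,q)$, so the two-hop path $p\to r\to q$ certifies $d_{G\setminus S'}(p,q)\le d_\pi(p,r)+d_\pi(r,q)$.

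Bounding this two-hop length is the technical heart of the proof. Detouring through $r_\ell$ and using the geodesic triangle inequality yields
\[
d_\pi(p,r)+d_\pi(r,q)\;\le\;d_\pi(p,r_\ell)+d_\pi(r_\ell,q)+2\,d_\pi(r,r_\ell),
\]
and the defining inequality $w'(r)\le w'(p)$ gives $d_\pi(r,r_\ell)\le w(p)+d_\pi(p,p_\ell)$. Since $\ell$ is a chord of the simple polygon, $d_\pi(x,y)=|xy|$ for any $x,y\in\ell$; the $s$-SSPD property with $s=4/\epsilon$ then forces $|p_\ell r_\ell|\le 2\,\mathrm{radius}(A)\le(\epsilon/2)\,|p_\ell q_\ell|$, so the along-$\ell$ detour through $r_\ell$ inflates $|p_\ell q_\ell|$ by at most a $(1+\epsilon)$ factor. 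Substituting these estimates produces an upper bound dominated by
\[
3\,d_\pi(p,p_\ell)+2\,w(p)+(1+\epsilon)\,|p_\ell q_\ell|+d_\pi(q_\ell,q).
\]

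The main obstacle is converting this expression into $(\sqrt{10}+\epsilon)\,d_w(p,q)$. For this I plan to invoke Lemma~\ref{lem4}: after rotating coordinates so that $\ell$ is horizontal, the portion of the geodesic $\pi(p,q)$ lying in ${\cal P}'$ plays the role of the $y$-monotone convex path $H$, together with the right-triangle legs at the orthogonal foot of $p$ on the line supporting $\ell$; the lemma contributes an inequality of the form $3\,d_\pi(p,p_\ell)+|p_\ell q_\ell|\le\sqrt{10}\,d_\pi(p,q)$ modulo corrections arising from the possible mismatch between the orthogonal foot and the geodesic foot $p_\ell$, and the symmetric argument on the ${\cal P}''$-side absorbs $d_\pi(q_\ell,q)$. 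The $2w(p)$ term is absorbed by the $(\sqrt{10}+\epsilon)\,w(p)$ contribution to $d_w(p,q)$, and the $(1+\epsilon)$ SSPD slack folds into the overall $\epsilon$. The delicate points are (a)~verifying the convexity and $y$-monotonicity hypotheses of Lemma~\ref{lem4} for geodesics in a simple polygon abutting $\ell$, and (b)~arguing that the asymmetric choice of the representative on the smaller side $A$ rather than $B$ does not inflate the leading constant past $\sqrt{10}$.
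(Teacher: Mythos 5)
Your overall route is the same as the paper's: find the recursion level whose chord $\ell$ separates $p$ and $q$, take the SSPD pair $(A,B)$ with $p_\ell\in A$, $q_\ell\in B$, use the direct edge when $|A|<k+1$, and otherwise route through a surviving representative $r\in A'$ (guaranteed since $|A'|=k+1>|S'|$), charging $w(r)+d_\pi(r,r_\ell)\le w(p)+d_\pi(p,p_\ell)$ by the choice of $A'$ and using the $\frac{4}{\epsilon}$-SSPD separation to bound $|p_\ell r_\ell|$ against $|p_\ell q_\ell|$ and hence against $d_w(p,q)$. Apart from a bookkeeping slip (the graph distance of the two-hop path is $d_w(p,r)+d_w(r,q)$, so the terms $w(p)+2w(r)+w(q)$ must be carried along; they are absorbed the same way and do not change the constant), this part matches the paper's proof of Lemma~\ref{lem5}.

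The genuine gap is exactly in the step you defer to: the invocation of Lemma~\ref{lem4}. As you set it up --- $H$ equal to the portion of $\pi(p,q)$ inside ${\cal P}'$, with the right angle at the orthogonal foot of $p$ on the line supporting $\ell$ --- the lemma neither has its hypotheses guaranteed nor produces the inequality you need (with that $H$, $d(H)$ is $d_\pi(p,r)$, not $d_\pi(p,p_\ell)$, and the foot of $p$ need not even lie so that $p_\ell$ is between it and $r$). The paper instead introduces the apex points $p'$ and $q'$, the first points where $\pi(p,q)$ parts ways with $\pi(p,p_\ell)$ and with $\pi(q,q_\ell)$; it splits $3d_\pi(p,p_\ell)=3d_\pi(p,p')+3d_\pi(p',p_\ell)$ and $|p_\ell q_\ell|\le 2|p_\ell c_{j_\ell}|+|p_\ell r|+|r q_\ell|$, and applies Lemma~\ref{lem4} inside the two funnels with apices $p'$ and $q'$: take $B=p'$, $C=r$, $D=p_\ell$, $H=\pi(p',p_\ell)$, and the right angle at $h_{p'}$, the orthogonal projection of the \emph{apex} $p'$ (not of $p$) onto the line through $p_\ell$ and $r$, giving $3d_\pi(p',p_\ell)+|p_\ell r|\le \sqrt{10}\,d_\pi(p',r)$, and symmetrically $3d_\pi(q',q_\ell)+|r q_\ell|\le\sqrt{10}\,d_\pi(q',r)$. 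It is only after removing the common prefixes $\pi(p,p')$ and $\pi(q,q')$ (which may wrap around reflex vertices together with $\pi(p,q)$, destroying the convex, $y$-monotone configuration for the full paths) that the hypotheses of Lemma~\ref{lem4} hold; the prefixes are then charged at rate $3\le\sqrt{10}$, and $d_\pi(p,q)=d_\pi(p,p')+d_\pi(p',q')+d_\pi(q',q)$ assembles the bound $(\sqrt{10}+\epsilon)d_w(p,q)$. So your ``delicate point (a)'' is not a verification detail to be checked later: without the apex decomposition the Lemma~\ref{lem4} step, as you describe it, does not go through, and this is precisely where the $\sqrt{10}$ is earned.
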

\begin{proof}
Consider any set $S' \subset S$ such that $|S'| \leq k$. 
Let $p, q$ be any points in $S \setminus S'$. 
First, we note that there exists a splitting line segment $\ell$ at some iteration of the algorithm such that $p$ and $q$ lie on different sides of $\ell$.
Let $\pi(p, q)$ be a shortest path between $p$ and $q$.
Also, let $r$ be a point at which $\pi(p,q)$ intersects $\ell$.
Consider a pair $(A,B)$ in $\frac{4}{\epsilon}$-SSPD $\cal S$ such that $p_\ell \in A$ and $q_\ell \in B$ or, $q_\ell \in A$ and $p_\ell \in B$. 
We note that since $\cal S$ is a pair decomposition of points in $S_l$, such a $(A, B)$ pair always exists in $\cal S$.
Without loss of generality, assume the former holds.
When $|A| < k+1$, there exists an edge between $p$ and $q$ in $G$. 
Hence, $d_{G \setminus S'}(p,q) = d_w(p,q)$.
Consider the other case in which $|A| \geq k+1$.
Since $|S'| \le k$, there exists a $c_j \in A$ such that $c_j \notin S'$.
Therefore, $d_{G \setminus S'}(p,q)$
{\setlength{\abovedisplayskip}{0pt}
\begin{flalign}
\hspace{0.2mm}
    &=d_w(p, c_j) + d_w(c_j,q)&&\nonumber\\
    &= w(p) + d_{\pi}(p,c_j) + w(c_j) + w(c_j) + d_{\pi}(c_j,q) + w(q)&&\nonumber\\
    &\hspace{5mm}\text{[by the definition of $A'$]}&&\nonumber\\
    &\leq w(p) + d_{\pi}(p,p_\ell) + |p_\ell c_{j_\ell}| + d_\pi(c_{j_\ell},c_j) + w(c_j) + w(c_j) + &&\nonumber\\
    &\hspace{5mm} d_{\pi}(c_{j_\ell},c_j) + |c_{j_\ell} q_\ell| + d_{\pi}(q_\ell,q) + w(q)&&\nonumber\\
    &\hspace{5mm}\text{[since geodesic shortest paths follow triangle inequality]}&&\nonumber\\
	&\leq w(p) + d_{\pi}(p,p_\ell) + |p_\ell c_{j_\ell}| + w(p) + w(p) + d_{\pi}(p,p_\ell) + d_{\pi}(p,p_\ell) &&\nonumber\\
    &\hspace{5mm}+ |c_{j_\ell} q_\ell| + d_{\pi}(q_\ell,q) + w(q)&&\nonumber\\
    &\hspace{5mm}\text{[by the definition of $A'$]}&&\nonumber\\
    &\leq 3[w(p) + w(q)] + 3d_{\pi}(p,p_\ell) + |p_\ell c_{j_\ell}| + |c_{j_\ell} q_\ell| + d_{\pi}(q_\ell,q)&&\nonumber\\
    &\leq 3[w(p) + w(q)] + 3d_{\pi}(p,p_\ell) + |p_\ell c_{j_\ell}| + |p_\ell c_{j_\ell}| + |p_\ell r| + |r q_\ell| + d_{\pi}(q_\ell,q)&&\nonumber\\
    &\hspace{5mm}\text{[since Euclidean distances follow triangle inequality]}&&\nonumber\\
    &= 3[w(p) + w(q)] + 3d_{\pi}(p,p_\ell) + 2|p_\ell c_{j_\ell}| + |p_\ell r| + |r q_\ell| + d_{\pi}(q_\ell,q)&&\nonumber\\
    &\leq 3[w(p) + w(q)] + 3d_{\pi}(p,p') + 3d_{\pi}(p',p_\ell) + 2|p_\ell c_{j_\ell}| + |p_\ell r| + &\nonumber\\
    &\hspace{5mm}|r q_\ell| + d_{\pi}(q_\ell,q') + d_{\pi}(q',q)&&\nonumber\\
    &\hspace{5mm}\text{[by the triangle inequality of Euclidean distances; here, $p'$ (resp. $q'$) is the} &\nonumber\\
    &\hspace{5mm}\text{the first point at which $\pi(p,q)$ and $\pi(p,p_\ell)$ (resp. $\pi(q,q_\ell)$) part ways]}&&\nonumber\\
    &\leq 3[w(p) + w(q)] + 3d_{\pi}(p,p') + \sqrt{10}d_{\pi}(p',r) + 2|p_\ell c_{j_\ell}| + \sqrt{10}d_{\pi}(q',r) +&&\nonumber\\
    &\hspace{5mm}d_{\pi}(q',q)&&\nonumber\\
    &\hspace{5mm}\text{[applying Lemma~\ref{lem4} to triangles $q' h_{q'} r$ and $p' h_{p'} r$, where $h_{p'}$ (resp. $h_{q'}$)}&&\nonumber\\
    &\hspace{5mm}\text{is the projection on to line defined by $p_\ell$ ($q_\ell$)and $r$]}&\nonumber\\
    &\leq 3[w(p) + w(q)] + 3d_{\pi}(p,p') + \sqrt{10}d_w(p',r) + 2|p_\ell c_{j_\ell}| + \sqrt{10}d_w(q',r) +&&\nonumber\\
    &\hspace{5mm}d_{\pi}(q',q)&&\nonumber\\
    &\hspace{5mm}\text{[since the weight associated with any point is non-negative]}&&\nonumber\\
    &\leq 3[w(p) + w(q)] + 3d_{\pi}(p,p') + \sqrt{10}d_{w}(p',q') + 2|p_\ell c_{j_\ell}| + d_{\pi}(q',q)\label{eq14}&& \\
	&\hspace{5mm}\text{[$r$ is the point where $\pi(p,q)$ intersects $\ell$; optimal substructure}&&\nonumber
\end{flalign}}

{\setlength{\abovedisplayskip}{0pt}
\begin{flalign}
\hspace{0.2mm}
	&\hspace{5mm}\text{property of shortest paths says $d_w(p,q) = d_w(p,r) + d_w(r,q)$]}&&\nonumber\\
	&\leq 3[w(p) + w(q)] + 3d_{\pi}(p,p') + \sqrt{10}d_w(p',q') + \epsilon d_w(p,q) + d_{\pi}(q',q). \label{eqbr}&&
\end{flalign}}

\noindent
Since $\cal S$ is a $\frac{4}{\epsilon}$-SSPD for the set $S_\ell$ of points, for any pair $(X, Y)$ of $\cal S$, the distance between any two points in $X$ is at most $\frac{\epsilon}{2}$ times of the distance between $X$ and $Y$. 
Hence,
{\setlength{\abovedisplayskip}{0pt}
\begin{flalign}
\hspace{6mm}&|p_\ell c_{j_\ell}| \leq \frac{\epsilon}{2}|p_\ell q_\ell|.\label{eq15}&
\end{flalign}}

\noindent
Therefore, $|p_\ell q_\ell|$
{\setlength{\abovedisplayskip}{0pt}
\begin{flalign}
    &\leq |p_\ell r| + |r q_\ell|&&\nonumber\\
    &\hspace{5mm}\text{[by the triangle inequality]}&&\nonumber\\
        &\leq |p_\ell p| + |p r| + |r q| + |q q_\ell|&&\nonumber\\
    &\hspace{5mm}\text{[by the triangle inequality]}&&\nonumber\\
        &\leq d_{\pi}(p_\ell,p) + d_{\pi}(p,r) + d_{\pi}(r,q) + d_{\pi}(q,q_\ell)&&\nonumber\\
        &\leq d_{\pi}(p,r) + d_{\pi}(p,r) + d_{\pi}(r,q) + d_{\pi}(r,q)&&\nonumber\\
    &\hspace{5mm}\text{[by the definition of projection of a point on} \ l \ \text{]}&&\nonumber\\
    &\leq d_w(p,r) + d_w(p,r) + d_w(r,q) + d_w(r,q)&&\nonumber\\
    &\hspace{5mm}\text{[since the weight associated with each point is non-negative]}&&\nonumber\\
        &= 2d_w(p,q)\label{eq16}&&\\
    &\hspace{5mm}\text{[since} \ r \ \text{is the point where} \ \pi(p,q) \ \text{intersects} \ l \ \text{].}&\nonumber
\end{flalign}}

\noindent
From (\ref{eq15}) and (\ref{eq16}),
{\setlength{\abovedisplayskip}{0pt}
\begin{flalign}
	&\hspace{5mm}|p_\ell c_{j_\ell}| \leq \epsilon d_w(p,q).\label{eqplell1} &&
\end{flalign}}

\noindent
Then, $d_{G \setminus S'}(p,q)$
{\setlength{\abovedisplayskip}{0pt}
\begin{flalign}
	&\leq 3[w(p) + w(q)] + 3d_{\pi}(p,p') + \sqrt{10}d_w(p',q') + \epsilon d_w(p,q) + d_{\pi}(q',q). &&\nonumber\\
	&\hspace{5mm}\text{[from (\ref{eqbr}) and (\ref{eqplell1})]} && \nonumber \\
        &\leq 3d_w(p,p') + \sqrt{10}d_w(p',q') + \epsilon d_w(p,q) + d_w(q',q)&&\nonumber\\
	&\hspace{5mm}\text{[since the weight associated with each point is non-negative]}&&\nonumber\\
        &\leq \sqrt{10}d_w(p,q) + \epsilon d_w(p,q)&&\nonumber\\
	&\hspace{5mm}\text{[since $d_\pi(p,q) = d_\pi(p,p') + d_\pi(p',q') + d_\pi(q',q)$]}.&&\nonumber
\end{flalign}}
Hence, $d_{G \setminus S'}(p,q) \le (\sqrt{10} + \epsilon)d_w(p,q)$.
\end{proof}

\begin{theorem}
\label{thm:simppoly}
Given a simple polygon $\cal P$, a set $S$ of $n$ points located in $\cal P$, a weight function $w$ to associate a non-negative weight to each point in $S$, a positive integer $k$, and a real number $0 < \epsilon \le 1$, Algorithm~\ref{algo:vftswpsimppoly} computes a $(k, \sqrt{10}+\epsilon, w)$-vertex fault-tolerant geodesic spanner network $G$ with $O(k n (\lg n)^{2})$ edges, for the set $S$ of weighted points.
\end{theorem}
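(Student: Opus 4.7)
The correctness assertion—that $G$ is a $(k,\sqrt{10}+\epsilon,w)$-VFTSWP for the set $S$ of weighted points—is exactly the content of Lemma~\ref{lem5}, so my plan is to focus solely on the edge count of $O(kn(\lg n)^2)$.

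First, I would analyze the recursion depth. The polygon-cutting theorem of \cite{conf/jcdcg/BoseCKKM98} guarantees that the segment $\ell$ can be chosen so that both ${\cal P}'$ and ${\cal P}''$ contain at most $\tfrac{2}{3}|S|$ points of $S$. Consequently the recursion tree of Algorithm~\ref{algo:vftswpsimppoly} has depth $O(\lg n)$, and the point subsets occurring at any fixed depth are pairwise disjoint subsets of $S$ whose cardinalities sum to at most $n$.

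Next, I would bound the edges added in a single invocation of \textsc{IncludeEdgesUsingSSPD} applied to a set of $m$ points projected onto $\ell$. The SSPD construction of \cite{journals/dcg/ABFG09} produces a $\tfrac{4}{\epsilon}$-SSPD of one-dimensional points whose weight $\sum_i(|A_i|+|B_i|)$ is $O(m\lg m)$ (the factor $\tfrac{1}{\epsilon}$ is absorbed into the $O(\cdot)$, since $\epsilon$ is a fixed constant). For each pair $(A,B)$ with, say, $\mathrm{radius}(A)\le\mathrm{radius}(B)$, Algorithm~\ref{algo:addedgessspd} adds either $|A|\cdot|B|$ edges, when $|A|<k+1$, or $(k+1)(|A|+|B|)$ edges, when $|A|\ge k+1$. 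In the first case $|A|\cdot|B|\le k\cdot|B|\le k(|A|+|B|)$, and in the second case the count is already of the form $(k+1)(|A|+|B|)$, so in either branch the pair contributes $O(k(|A|+|B|))$ edges. Summing over the SSPD yields $O(k\, m\lg m)$ edges from this invocation.

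Finally, summing this per-call bound over the disjoint point subsets present at a single recursion level gives $O(kn\lg n)$ edges per level, and multiplying by the $O(\lg n)$ recursion depth produces the claimed $O(kn(\lg n)^2)$ total. The quantitative step that needs the most care is uniformly charging both branches of Algorithm~\ref{algo:addedgessspd} proportionally to $|A|+|B|$, so that the one-dimensional SSPD weight bound from \cite{journals/dcg/ABFG09} transfers cleanly to the overall edge count; once this is in hand, the rest is a direct two-level summation.
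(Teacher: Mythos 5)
Your proposal is correct and follows essentially the same route as the paper: correctness is delegated to Lemma~\ref{lem5}, each SSPD pair $(A,B)$ is charged $O(k(|A|+|B|))$ edges (covering both branches of Algorithm~\ref{algo:addedgessspd}), the $\frac{4}{\epsilon}$-SSPD weight bound $O(n\lg n)$ is invoked with $\epsilon$ treated as a constant, and the balanced polygon-cutting split gives the extra $O(\lg n)$ factor. The only cosmetic difference is that you sum over recursion levels while the paper states the equivalent recurrence $S(n)=S(n_1)+S(n_2)+O(kn\lg n)$ with $n_1,n_2\ge n/3$; your explicit per-pair case analysis is a slightly more detailed justification of the same bound.
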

\begin{proof}
From Lemma~\ref{lem5}, the spanner constructed is $(k, \sqrt{10}+\epsilon, w)$-VFTSWP. 
Let $S(n)$ be the number of edges in the spanner.
Also, let $n_1, n_2$ be the sizes of sets obtained by partitioning the initial $n$ points at the root node of the divide-and-conquer recursion tree. 
The recurrence is $S(n) = S(n_1) + S(n_2) + \sum_{(A, B) \in \frac{4}{\epsilon}\text{-SSPD}} (k(|A| + |B|))$.
Since $(A, B)$ is a pair in $\frac{4}{\epsilon}$-SSPD, $\sum (|A| + |B|) = O(n \lg n) $.
Noting that $n_1, n_2 \ge n/3$, the size of the spanner is $O(k n (\lg n)^{2})$.
\end{proof}

\section{Vertex fault-tolerant spanner for weighted points in a polygonal domain}
\label{sect:polydom}

Given a polygonal domain $\cal P$, a set $S$ of $n$ points located in the free space $\cal D$ of $\cal P$, a weight function $w$ to associate a non-negative weight to each point in $S$, a positive integer $k$, and a real number $0 < \epsilon \le 1$, we compute a geodesic $(k, 6+\epsilon, w)$-VFTSWP for the set $S$ of weighted points. 
That is, for any set $S' \subset S$ with cardinality at most $k$, the graph $G \backslash S'$ is a $(6+\epsilon)$-spanner for the set $S'$ of weighted points. 

Algorithm~\ref{algo:vftswpolydom} mentioned below computes a geodesic $(6+\epsilon)$-VFTSWP $G$ for a set $S$ of $n$ points lying in the free space $\cal D$ of the polygonal domain $\cal P$, while each point in $S$ is associated with a non-negative weight.
The polygonal domain $\cal P$ consists of a simple polygon and $h$ simple polygonal holes located interior to that simple polygon.
A {\it splitting line segment} is a line segment in the free space with both of its endpoints on the boundary of the polygonal domain.
(Note that the boundary of a polygonal domain is the union of boundaries of holes and the boundary of the outer polygon.)
We combine and extend the algorithms given in Section~\ref{sect:simppoly} and the algorithms in \cite{conf/compgeom/AbamAHA15} to the case of input points associated with non-negative weights.
In addition, the spanner constructed by this algorithm is vertex fault-tolerant and achieves a $6+\epsilon$ multiplicative stretch.

The following theorem from \cite{journals/siamdm/AlonST94} on computing a planar separator helps in devising a divide-and-conquer based algorithm (refer to Algorithm~\ref{algo:vftswpolydom}).

\begin{theorem}
\label{thm4}
(\cite{journals/siamdm/AlonST94})
Suppose $G=(V,E)$ is a planar vertex-weighted graph with $|V| = m$. 
Then, an $O(\sqrt{m})$-separator for $G$ can be computed in $O(m)$ time. That is, $V$ can be partitioned into sets $P, Q,$ and $R$ such that $|R| = O(\sqrt{m})$, there is no edge between $P$ and $Q$, and $w(P),w(Q) \leq \frac{2}{3}w(V)$.
Here, $w(X)$ is the sum of weights of all vertices in $X$.
\end{theorem}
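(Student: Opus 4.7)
Since Theorem~\ref{thm4} is quoted verbatim from Alon, Seymour, and Thomas~\cite{journals/siamdm/AlonST94}, a proof within this paper would normally just cite the source. If I were to sketch the argument from scratch, I would follow the classical Lipton--Tarjan style separator construction, carefully optimized so that the running time is linear rather than $O(m \log m)$ and the separator has size $O(\sqrt{m})$ rather than $O(\sqrt{m \log m})$. The heart of the plan is to combine BFS layering with a pigeonhole argument that locates two light levels whose union is an $O(\sqrt{m})$ balanced vertex separator.

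First I would root a BFS tree at an arbitrary vertex $v$ of $G$, producing layers $L_0, L_1, \ldots, L_d$. Scanning the layers in order, I would identify the \emph{median level} $L_\mu$ such that the total weight strictly above and strictly below $L_\mu$ is each at most $w(V)/2$. This preprocessing costs $O(m)$ via BFS and a single scan.

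Next I would search outward from $L_\mu$ for two light levels $L_i$ with $i \le \mu$ and $L_j$ with $j \ge \mu$, each of size $O(\sqrt{m})$. Since the $m$ vertices are distributed across the layers, pigeonhole over any $\sqrt{m}$ consecutive layers on each side of $L_\mu$ forces the existence of such a level. Set $R := L_i \cup L_j$; removing $R$ splits $G$ into an outer part (layers below $i$ together with layers above $j$) and a middle slab (layers strictly between $i$ and $j$), with $|R| = O(\sqrt{m})$. If the middle slab has weight at most $2w(V)/3$, I take $Q$ to be the slab and $P$ the remainder, and we are done.

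The main obstacle is the residual case in which the middle slab itself carries more than $2w(V)/3$ of the weight. The standard workaround is to contract $\bigcup_{\ell < i} L_\ell$ and $\bigcup_{\ell > j} L_\ell$ into single vertices $x_1, x_2$; planarity is preserved and the contracted graph has BFS radius at most $\sqrt{m}$. After triangulating, each fundamental cycle with respect to the BFS tree has length $O(\sqrt{m})$, and a continuity/rotation argument over non-tree edges shows that some fundamental cycle separates the interior and exterior weights in a $\tfrac{2}{3}$-balanced fashion. Adding the vertices of that cycle to $R$ yields the claimed $O(\sqrt{m})$-size balanced separator. The delicate part is precisely this rotation argument: showing that, as one slides the chosen non-tree edge across the triangulated planar dual, the weight enclosed by the associated fundamental cycle varies in small enough increments that some edge must witness a balanced split. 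All primitives involved (BFS, triangulation, fundamental-cycle computation, weight bookkeeping) are implementable in $O(m)$ time, which is what yields the overall linear-time guarantee.
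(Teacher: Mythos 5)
The paper contains no proof of this statement: Theorem~\ref{thm4} is imported verbatim from Alon, Seymour and Thomas \cite{journals/siamdm/AlonST94} and is used purely as a black box inside Algorithm~\ref{algo:vftswpolydom}, so your opening remark --- that a citation is all that is expected here --- is exactly what the paper does. Your sketch is therefore extra material rather than something to be checked against the paper; as an outline it is the classical Lipton--Tarjan construction adapted to vertex weights (BFS layering, a weight-median level, two cardinality-light levels within $O(\sqrt{m})$ layers of it, contraction of the outer parts, and a fundamental-cycle separator in a triangulation of the contracted graph), and that route does indeed yield an $O(\sqrt{m})$ separator with weight balance $w(P),w(Q)\le \frac{2}{3}w(V)$ in linear time. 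Two caveats are worth noting. First, the argument in the cited source is not this one: Alon, Seymour and Thomas obtain the theorem (with better constants) by a different, cycle-optimization style argument over a plane triangulation rather than by BFS-level pigeonhole, so your sketch reconstructs Lipton--Tarjan rather than the quoted reference. Second, the step you yourself flag as delicate --- that among the fundamental cycles of the contracted, triangulated graph some cycle encloses between one third and two thirds of the weight, and that this can be found in $O(m)$ time --- is only gestured at; as written it is an outline, not a proof, though it is the standard and repairable part of the classical argument. Since the statement is an external result the paper never proves, none of this affects the paper's correctness.
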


\begin{algorithm}
	\caption{VFTSWPPolygonalDomain(${\cal D}, S$)}
	\label{algo:vftswpolydom}
    
	\SetAlgoLined
	\SetKwInOut{Input}{Input}
	\SetKwInOut{Output}{Output}
	\Input{The free space $\cal D$ of a polygonal domain $\cal P$, a set $S$ on $n$ points located in $\cal D$, a weight function $w$ that associates a non-negative weight to each point in $S$, an integer $k \geq 1$, and a real number $0 < \epsilon \le 1$.}
	\Output{A $(k, 6+\epsilon, w)$-VFTSWP $G$.}

	\BlankLine
    
	If $|S| \le 1$ then return. \\

	\BlankLine

	Partition the free space $\cal D$ into $O(h)$ simple polygons using $O(h)$ splitting line segments, such that no splitting line segment crosses any of the holes of $\cal P$, and each of the resultant simple polygons has at most three splitting line segments bounding it.
	This is done by choosing a leftmost vertex $l_O$ (resp. a rightmost vertex $r_O$) along the $x$-axis of each obstacle $O \in \cal P$, and projecting $l_O$ (resp. $r_O$) both upwards and downwards, parallel to $y$-axis.
	(If any of the resultant simple polygons have more than three splitting line segments on its boundary, then that polygon is further decomposed arbitrarily so that the resulting polygon has at most three splitting line segments on its boundary.) \\

	\BlankLine

	A planar graph $G_d$ is constructed: each vertex of $G_d$ corresponds to a simple polygon in the above decomposition; for any two vertices $v', v''$ of $G_d$, an edge $(v', v'')$ is introduced into $G_d$ whenever the corresponding simple polygons of $v'$ and $v''$ are adjacent to each other in the decomposition.
	Further, each vertex $v$ of $G_d$ is associated with a weight equal to the number of points that lie inside the simple polygon corresponding to $v$. \\

	\BlankLine

	\If{the number of vertices in $G_d$ is $1$}{
		Let $\cal D'$ be the simple polygon that corresponds to the vertex of $G_d$, and let $S'$ be the points in $S$ that belong to $\cal D'$. \\

		\BlankLine

		VFTSWPSimplePolygon($\cal D', S'$). \scriptsize{} (Refer to Algorithm~\ref{algo:vftswpsimppoly}.) \normalsize{} \\

		\BlankLine

		\Return
	}

	\BlankLine

	Compute an $O(\sqrt{h})$-separator $R$ for the planar graph $G_d$ using Theorem~\ref{thm4}, and let $P, Q,$ and $R$ be the sets into which the vertices of $G_d$ is partitioned. \\

	\BlankLine

	For each vertex $r \in R$, we collect the bounding splitting line segments of the simple polygon corresponding to $r$ into $H$, i.e., $O(\sqrt{h})$ splitting line segments are collected into a set $H$. \\
            
	\BlankLine

	\ForEach{$l \in H$}{
		For each point $p$ that lies in $\cal D$, compute a geodesic projection $p_\ell$ of $p$ on $\ell$. 
		Let $S_\ell$ be the set of points resultant from these projections. \\

		\BlankLine

		Using the algorithm from \cite{journals/dcg/ABFG09}, compute a $\frac{8}{\epsilon}$-SSPD $\cal S$ for the points in $S_\ell$. \\

		\BlankLine

		IncludeEdgesUsingSSPD(${\cal S}, G$).  \scriptsize{} (Refer to Algorithm~\ref{algo:addedgessspd}.) \normalsize{}
	}
    
	Let $\cal D'$ (resp. $\cal D''$) be the union of simple polygons that correspond to vertices of $G_d$ in $P$ (resp. $Q$).
	(Note that $\cal D'$ and $\cal D''$ are polygonal domains.)
	Also, let $S'$ (resp. $S''$) be the points in $S$ that belong to $\cal D'$ (resp. $\cal D''$). \\

	\BlankLine

	VFTSWPPolygonalDomain(${\cal D'}, S'$). \\

	\BlankLine

	VFTSWPPolygonalDomain(${\cal D''}, S''$).

\end{algorithm}

The following lemma proves that $G$ computed by Algorithm~\ref{algo:vftswpolydom} is a geodesic $(k, 6+\epsilon, w)$-VFTSWP for the set $S$ of points.

\begin{lemma}
\label{lem17}
The spanner $G$ is a geodesic $(k, 6+\epsilon, w)$-VFTSWP for the set $S$ of points located in $\cal D$.
\end{lemma}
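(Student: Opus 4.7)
The plan is to mirror the proof of Lemma~\ref{lem5} but replace the simple-polygon-specific step (the $\sqrt{10}$ bound coming from Lemma~\ref{lem4}) with plain Euclidean--vs.--geodesic inequalities. Fix a faulty set $S' \subset S$ with $|S'| \le k$ and any $p, q \in S \setminus S'$, and trace the recursion of Algorithm~\ref{algo:vftswpolydom} starting at the root. At every recursion node one of three things happens: (a) $G_d$ collapses to a single vertex, so Algorithm~\ref{algo:vftswpsimppoly} is invoked on a simple subpolygon containing both $p$ and $q$, and Lemma~\ref{lem5} already yields stretch $\sqrt{10}+\epsilon \le 6+\epsilon$ between them; (b) $p$ and $q$ descend together into $\cal D'$ or $\cal D''$, and I just continue the induction on recursion depth; or (c) $p$ and $q$ get placed into different subdomains (or at least one of them lies inside a polygon whose $G_d$-vertex is in the separator $R$). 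In case (c), every path from $p$ to $q$ in the free space must cross the boundary separating their respective regions, which is a subset of the splitting segments in $H$. Hence there exist $\ell \in H$ and a point $r \in \pi(p, q) \cap \ell$.

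At this separating level, I would pick the pair $(A, B)$ of the $\tfrac{8}{\epsilon}$-SSPD on the projections such that $p_\ell \in A$ and $q_\ell \in B$, with $\text{radius}(A) \le \text{radius}(B)$ without loss of generality. If $|A| < k+1$, Algorithm~\ref{algo:addedgessspd} puts the direct edge between (the original points corresponding to) $p_\ell$ and $q_\ell$ into $G$, and since $p, q \notin S'$ it survives in $G \setminus S'$, giving $d_{G \setminus S'}(p, q) \le d_w(p, q)$. Otherwise $|A| \ge k+1$, and among the $k+1$ minimum-$w'$-weight points of $A'$ at least one, call it $c_j$, avoids $S'$; the edges $(p, c_j)$ and $(c_j, q)$ are both in $G \setminus S'$, so $d_{G \setminus S'}(p, q) \le d_w(p, c_j) + d_w(c_j, q)$.

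The remaining task is to bound $d_w(p, c_j) + d_w(c_j, q)$ by $(6+\epsilon) d_w(p, q)$, which I would do by combining five ingredients, exactly as in Lemma~\ref{lem5} up to the single point where Lemma~\ref{lem4} is replaced: (i) the geodesic triangle inequality to route $p \to p_\ell \to c_{j_\ell} \to c_j$ and $c_j \to c_{j_\ell} \to q_\ell \to q$; (ii) the defining property of $A'$, namely $w(c_j) + d_\pi(c_j, c_{j_\ell}) \le w(p) + d_\pi(p, p_\ell)$ since $p \in A$, used twice to absorb the detour through $c_j$; (iii) the $\tfrac{8}{\epsilon}$-SSPD separation, which gives $|p_\ell c_{j_\ell}| \le 2\,\text{radius}(A) \le \tfrac{\epsilon}{4}|p_\ell q_\ell|$; (iv) the geodesic-projection inequalities $d_\pi(p, p_\ell) \le d_\pi(p, r)$ and $d_\pi(q, q_\ell) \le d_\pi(q, r)$; and (v) the Euclidean bound $|p_\ell q_\ell| \le |p_\ell r| + |r q_\ell| \le 2(d_\pi(p,r) + d_\pi(q,r)) = 2 d_\pi(p, q)$, obtained by Euclidean triangle inequality on $\ell$ and the fact that Euclidean distance is dominated by geodesic distance. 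Assembling these and using $d_\pi(p, p_\ell) + d_\pi(p_\ell, p) \le d_\pi(p, r)$ together with $r$ lying on the geodesic $\pi(p, q)$ gives $d_w(p, c_j) + d_w(c_j, q) \le 3(w(p)+w(q)) + (5+\epsilon) d_\pi(p,q) \le (6+\epsilon) d_w(p, q)$.

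The main obstacle is ingredient (v) and its interplay with the detour cost $3 d_\pi(p, p_\ell)$ coming from step (ii). In the simple polygon case these are jointly controlled by Lemma~\ref{lem4}, which exploits the fact that the region bounded by $\pi(p, p_\ell)$, a portion of $\ell$, and the line from $p$ perpendicular to $\ell$ is convex and $y$-monotone, yielding the sharp constant $\sqrt{10}$. In a polygonal domain, obstacles can force $\pi(p, p_\ell)$ to bend arbitrarily, destroying both convexity and $y$-monotonicity, so Lemma~\ref{lem4} is unavailable. I must instead fall back to the crude bound $|p_\ell r| \le |p_\ell p| + |p r| \le 2 d_\pi(p, r)$, which costs an extra constant in the stretch and is precisely what forces the final stretch up to $(6 + \epsilon)$ in place of $(\sqrt{10}+\epsilon)$.
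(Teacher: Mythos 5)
Your proposal follows essentially the same route as the paper's proof of Lemma~\ref{lem17}: reduce to the case where a shortest path $\pi(p,q)$ crosses a splitting segment $\ell \in H$, take the $\frac{8}{\epsilon}$-SSPD pair $(A,B)$ containing $p_\ell$ and $q_\ell$, handle $|A|<k+1$ by the direct edge and otherwise route through a surviving $c_j \in A'$, and replace the simple-polygon Lemma~\ref{lem4} by the cruder Euclidean-versus-geodesic and projection inequalities, which is exactly how the paper arrives at the $6+\epsilon$ stretch (your assembly even gives a slightly tighter constant). The only blemish is the written inequality $d_\pi(p,p_\ell)+d_\pi(p_\ell,p) \le d_\pi(p,r)$, which as stated is false (it asserts $2d_\pi(p,p_\ell)\le d_\pi(p,r)$) but is clearly a slip for the projection bound $d_\pi(p,p_\ell)\le d_\pi(p,r)$ of your ingredient (iv), and your ingredients (i)--(v) already suffice for the claimed final bound.
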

\begin{proof}
Consider any set $S' \subset S$ such that $|S'| \leq k$.
Let $p, q$ be any two points in $S \setminus S'$. 
Based on the locations of $p$ and $q$, we consider the following cases:
(1) The points $p$ and $q$ lie inside the same simple polygon and no shortest path between $p$ and $q$ intersects any splitting line segment from the set $H$.
(2) The points $p$ and $q$ belong to two distinct simple polygons in the simple polygonal subdivision of $\cal D$, and both of these simple polygons correspond to vertices of one of $P, Q,$ and $R$.
(3) The points $p$ and $q$ belong to two distinct simple polygons in the simple polygonal subdivision of $\cal D$, but if one of these simple polygons correspond to a vertex of $P' \in \{P, Q, R\}$, then the other simple polygon correspond to a vertex of $P'' \in \{P, Q, R\}$ and $P'' \ne P'$.
In Case (1), we run Algorithm~\ref{algo:vftswpsimppoly}, which implies there exists a path with $\sqrt{10} + \epsilon$ multiplicative stretch between $p$ and $q$.
In Cases (2) and (3), a shortest path between $p$ and $q$ intersects at least one of the $O(\sqrt{h})$ splitting line segments collected in the set $H$.
Let $\ell$ be the splitting line segment that intersects a shortest path, say $\pi(p, q)$, between $p$ and $q$.
Also, let $r$ be the point of intersection of $\pi(p, q)$ with $\ell$.
At this recursive step, consider a pair $(A,B)$ in $\frac{8}{\epsilon}$-SSPD $\cal S$ such that $p_\ell \in A$ and $q_\ell \in B$ or, $q_\ell \in A$ and $p_\ell \in B$, where $p_\ell$ (resp. $q_\ell$) is the projection of $p$ (resp. $q$) on $\ell$. 
Without loss of generality, assume the former holds.
Suppose $|A| < k+1$.
Then, there exists an edge between $p$ and $q$.
Hence, $d_{G \setminus S'}(p,q) = d_w(p,q)$.
Consider the other case in which $|A| \geq k+1$.
Since $|S'| \le k$, there exists a $c_j \in A$ such that $c_j \notin S'$.
Therefore, 
{\setlength{\abovedisplayskip}{0pt}
\begin{flalign}
	&d_{G \setminus S'}(p,q)&&\nonumber\\
    &=d_w(p, c_j) + d_w(c_j,q)&&\nonumber\\
    &= w(p) + d_{\pi}(p,c_j) + w(c_j) + w(c_j) + d_{\pi}(c_j,q) + w(q).\label{eq27}& \\
    &\leq w(p) + d_{\pi}(p,p_\ell) + |p_\ell c_{j_\ell}| + d_{\pi}(c_{j_\ell},c_j) + w(c_j) + w(c_j) + d_{\pi}(c_{j_\ell},c_j) + &&\nonumber\\
    &\hspace{5mm} |c_{j_\ell} q_\ell| + d_{\pi}(q_\ell,q) + w(q) &&\nonumber\\
    &\hspace{5mm}\text{[since geodesic shortest paths follow triangle inequality]}&&\nonumber\\
    &\leq w(p) + d_w(p,q) + |p_\ell c_{j_\ell}| + d_{\pi}(c_{j_\ell},c_j) + w(c_j) + w(c_j) + d_{\pi}(c_{j_\ell},c_j) +&&\nonumber\\
	&\hspace{5mm}|c_{j_\ell} q_\ell| + w(q) \hspace{5mm}
	\text{[as $d_{\pi}(p,p_\ell) + d_{\pi}(q,q_\ell) \le d_w(p,q)$]}&&\nonumber\\
	&\leq w(p) + d_w(p,q) + \frac{\epsilon}{4}|p_\ell q_\ell| + 2 d_{\pi}(c_{j_\ell},c_j) + 2w(c_j) + &&\nonumber\\
	&\hspace{5mm}|c_{j_\ell} q_\ell| + w(q) \hspace{5mm}
	\text{[as $\cal S$ is a $\frac{8}{\epsilon}$-SSPD, $|p_\ell c_{j_\ell}| \leq \frac{\epsilon}{4}|p_\ell q_\ell|$]}&&\nonumber \\
	&\leq w(p) + d_w(p,q) + \frac{\epsilon}{4}[|p_\ell p| + |pq| + |q q_\ell|] + 2 d_{\pi}(c_{j_\ell},c_j) + 2w(c_j) + &&\nonumber \\
	&\hspace{5mm} |c_{j_\ell} q_\ell| + w(q) \hspace{5mm}\text{[by the triangle inequality]}&&\nonumber\\
    &\leq w(p) + d_w(p,q) + \frac{\epsilon}{4}[d_{\pi}(p_\ell,p) + d_{\pi}(p,q) + d_{\pi}(q,q_\ell)] + 2 d_{\pi}(c_{j_\ell},c_j) + 2w(c_j)&&\nonumber\\
	&\hspace{5mm}+ |c_{j_\ell} q_\ell| + w(q) \hspace{5mm}\text{[as the Euclidean distance between any two points}&&\nonumber\\
	&\hspace{5mm}\text{cannot be greater to the geodesic distance between them]}&&\nonumber\\
	&\leq w(p) + d_w(p,q) + \frac{\epsilon}{4}[d_w(p,q) + d_{\pi}(p,q)] + 2 d_{\pi}(c_{j_\ell},c_j) + 2w(c_j) +&&\nonumber\\
	&\hspace{5mm}|c_{j_\ell} q_\ell| + w(q) \hspace{5mm}\text{[as $\cal S$ is a $\frac{8}{\epsilon}$-SSPD].}\label{eqbr2}&&
\end{flalign}}

\noindent
Since $\cal S$ is a $\frac{8}{\epsilon}$-SSPD for the set $S_\ell$ of points, for any pair $(X, Y)$ of $\cal S$, the distance between any two points in $X$ is at most $\frac{\epsilon}{4}$ times of the distance between $X$ and $Y$. 
Therefore,
{\setlength{\abovedisplayskip}{0pt}
\begin{flalign}
\hspace{6mm}|p_\ell c_{j_\ell}| \leq \frac{\epsilon}{4}|p_\ell q_\ell|.\label{eq35}&&
\end{flalign}

\noindent
Since $r$ is the point where $\pi(p,q)$ intersects $\ell$, by the optimal sub-structure property of shortest paths,
\begin{flalign}
\hspace{6mm}&\pi(p,q) = \pi(p,r) + \pi(r,q).\label{eq36}&
\end{flalign}}

\noindent
Then, $d_{\pi}(p,p_\ell) + d_{\pi}(q,q_\ell)$ 
{\setlength{\abovedisplayskip}{0pt}
\begin{flalign}
    &\leq d_{\pi}(p,r) + d_{\pi}(q,r)&&\nonumber\\
    &\hspace{5mm}\text{[as $d_{\pi}(p, p_\ell) \leq d_{\pi}(p, r)$ and $d_{\pi}(q, q_\ell) \leq d_{\pi}(q, r)$]}&&\nonumber\\
        &= d_{\pi}(p,q)&&\nonumber\\
	&\hspace{5mm}\text{[since $\pi(p, q)$ intersects $\ell$ at $r$]}&&\nonumber\\
        &\leq d_w(p,q)\label{eq37}&&\\
    &\hspace{5mm}\text{[since the weight associated with each point is non-negative].}&\nonumber
\end{flalign}}

\noindent
Moreover, $|c_{j_\ell} q_\ell|$
{\setlength{\abovedisplayskip}{0pt}
\begin{flalign}
    &\leq |c_{j_\ell} p_\ell| + |p_\ell q_\ell|&&\nonumber\\
    &\hspace{5mm}\text{[by the triangle inequality]}&&\nonumber\\
        &\leq \frac{\epsilon}{4}|p_\ell q_\ell| + |p_\ell q_\ell|&&\nonumber\\
	&\hspace{5mm}\text{[from} \ (\ref{eq35})\text{]}&&\nonumber\\
        &\leq (\frac{\epsilon}{4} + 1)|p_\ell q_\ell|&&\nonumber\\
        &\leq (\frac{\epsilon}{4} + 1)(|p_\ell r| + |r q_\ell|)&&\nonumber \\
    &\hspace{5mm}\text{[by the triangle inequality]}&&\nonumber\\
        &\leq (\frac{\epsilon}{4} + 1)(|p_\ell p| + |pq| + |q q_\ell|)&&\nonumber\\
    &\hspace{5mm}\text{[by the triangle inequality]}&&\nonumber\\
        &\leq (\frac{\epsilon}{4} + 1)(d_{\pi}(p_\ell,p) + d_{\pi}(p,q) + d_{\pi}(q,q_\ell))&&\nonumber\\
    &\hspace{5mm}\text{[from the definition of geodesic distance]}&&\nonumber\\
        &\leq (\frac{\epsilon}{4} + 1)(d_{\pi}(r,p) + d_{\pi}(p,q) + d_{\pi}(q,r))&&\nonumber\\
    &\hspace{5mm}\text{[as $d_{\pi}(p, p_\ell) \leq d_{\pi}(p, r)$ and $d_{\pi}(q, q_\ell) \leq d_{\pi}(q, r)$]}&&\nonumber\\
        &= (\frac{\epsilon}{4} + 1)(d_w(p,q) + d_w(p,q))&&\nonumber \\
    &\hspace{5mm}\text{[from} \ (\ref{eq37})\text{]}&&\nonumber\\
	&= 2(\frac{\epsilon}{4} + 1)d_w(p,q).&\nonumber
\end{flalign}}

\noindent
Hence, 
{\setlength{\abovedisplayskip}{0pt}
\begin{flalign}
	&|p_\ell c_{j_\ell}| \le 2(\frac{\epsilon}{4} + 1)d_w(p,q). \label{eqbr3}&&
\end{flalign}}

{\setlength{\abovedisplayskip}{0pt}
\begin{flalign}
	&d_{G \setminus S'}(p,q) &&\nonumber\\
	&\leq w(p) + d_w(p,q) + \frac{2 \epsilon}{4}d_w(p,q) + 2 d_{\pi}(c_{j_\ell},c_j) + 2w(c_j) + (\frac{\epsilon}{2} + 2)d_w(p,q) + &&\nonumber\\
	&\hspace{5mm}w(q) \hspace{5mm}\text{[from (\ref{eqbr2}) and (\ref{eqbr3})]}&&\nonumber\\
	&\leq w(p) + d_w(p,q) + \frac{2 \epsilon}{4}d_w(p,q) + 2[w(p) + d_{\pi}(p,q)] + (\frac{\epsilon}{2} + 2)d_w(p,q) + &&\nonumber\\
	&\hspace{5mm}w(q) \hspace{5mm}\text{[by the definition of set $A'$]}&&\nonumber\\
	&\leq 3[w(p) + d_{\pi}(p,q) + w(q)] + d_w(p,q) + \frac{2 \epsilon}{4}d_w(p,q) + (\frac{\epsilon}{2} + 2)d_w(p,q)&&\nonumber\\
	&= 3d_w(p,q) + d_w(p,q) + \frac{2 \epsilon}{4}d_w(p,q) + (\frac{\epsilon}{2} + 2)d_w(p,q).&&\nonumber
\end{flalign}}
\vspace{-0.1in}
Hence, $d_{G \setminus S'}(p,q) \leq (6 + \epsilon)d_w(p,q)$.
\end{proof}

\begin{theorem}
\label{thm:polydom}
Given a polygonal domain $\cal P$, a set $S$ of $n$ points located in the free space $\cal D$ of $\cal P$, a weight function $w$ to associate a non-negative weight to each point in $S$, a positive integer $k$, and a real number $0 < \epsilon \le 1$, Algorithm~\ref{algo:vftswpolydom} computes a $(k, 6+\epsilon, w)$-vertex fault-tolerant spanner network $G$ with $O(k n \sqrt{h+1} (\lg n)^2)$ edges, for the set $S$ of weighted points. 
\end{theorem}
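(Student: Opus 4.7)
The correctness (stretch) of the construction is already established by Lemma~\ref{lem17}, so what remains is to bound the number of edges in $G$. The plan is to bound the edges introduced at a single recursive call, sum across levels of the divide-and-conquer recursion using the weighted-separator property of Theorem~\ref{thm4}, and finally add the base-case contributions coming from invocations of Algorithm~\ref{algo:vftswpsimppoly}.

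First I would bound the edges added in a single recursive call on a subproblem with $n'$ points and $h'$ obstacles. Theorem~\ref{thm4} yields $|R|=O(\sqrt{h'})$, and since each simple polygon in the decomposition is bounded by at most three splitting segments, the set $H$ collected from vertices of $R$ has $O(\sqrt{h'})$ splitting segments. For each $\ell \in H$, Algorithm~\ref{algo:addedgessspd} is run on a $\tfrac{8}{\epsilon}$-SSPD $\mathcal{S}$ of $S_{\ell}$. The number of edges contributed by a single pair $(A,B)\in\mathcal{S}$ is at most $(k+1)(|A|+|B|)$: the small-$A$ branch adds $|A|\cdot|B| \le (k+1)|B|$ edges, while the other branch adds $(k+1)(|A|+|B|)$ edges. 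Using the standard weight bound $\sum_{(A,B)\in\mathcal{S}}(|A|+|B|) = O(n' \lg n')$ for an SSPD on $n'$ points (the same bound the authors invoke inside the proof of Theorem~\ref{thm:simppoly}, obtained via \cite{journals/dcg/ABFG09}), each splitting segment in $H$ contributes $O(k n' \lg n')$ edges. Hence the one recursive call, before descending, contributes $O(\sqrt{h'+1}\,k\,n'\,\lg n')$ edges.

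Next I would set up and solve the recurrence. Because the separator of Theorem~\ref{thm4} is weighted by point counts, the two recursive subproblems $(n_1,h_1)$ and $(n_2,h_2)$ satisfy $n_i \le \tfrac{2}{3}n'$ and $h_i \le h'$, so the recursion tree has depth $O(\lg n)$. At any fixed level of this tree, the subproblems have point counts $n_1,\dots,n_m$ and obstacle counts $h_1,\dots,h_m$ with $\sum_i n_i \le n$ and $h_i \le h$ individually. Therefore the edges added at this level (excluding deeper calls) total
\[
\sum_{i=1}^{m} O\!\left(\sqrt{h_i+1}\,k\,n_i\,\lg n_i\right) \;\le\; O\!\left(\sqrt{h+1}\,k\,n\,\lg n\right),
\]
using $\sqrt{h_i+1}\le\sqrt{h+1}$ and $\sum_i n_i \lg n_i \le n \lg n$. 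Summing over the $O(\lg n)$ levels gives $O(\sqrt{h+1}\,k\,n\,(\lg n)^{2})$ edges from the non-base portion of the recursion.

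Finally, I would account for the base cases. Whenever $G_d$ consists of a single vertex, Algorithm~\ref{algo:vftswpsimppoly} is invoked on the corresponding simple polygon with some $n'$ points, and by Theorem~\ref{thm:simppoly} it contributes $O(k n'(\lg n')^{2})$ edges. Since the simple polygons handled in base cases together contain at most $n$ points, their total contribution is $O(k n (\lg n)^{2})$. Combining this with the bound from the non-base portion yields $O(\sqrt{h+1}\,k\,n\,(\lg n)^{2})$ edges, matching the claim. The main subtlety I anticipate is making the per-level summation tight: the bound $\sum_i \sqrt{h_i+1}\,n_i \le \sqrt{h+1}\cdot n$ uses the crude estimate $\max_i\sqrt{h_i+1}\le\sqrt{h+1}$ rather than something sharper like Cauchy–Schwarz, and I would want to confirm that this is genuinely where the $\sqrt{h+1}$ factor (rather than a smaller one) necessarily arises, and that the partitioning of obstacles across recursive branches is indeed consistent with $h_i\le h$ at every recursion level.
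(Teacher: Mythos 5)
Your proof is correct and follows essentially the same route as the paper's: stretch from Lemma~\ref{lem17}, a per-call edge bound of $O(\sqrt{h}\,k\,n\lg n)$ obtained from the $O(\sqrt{h})$ splitting segments and the SSPD weight bound $\sum(|A|+|B|)=O(n\lg n)$, a balanced recursion of depth $O(\lg n)$ driven by the weighted planar separator, and the base cases charged via Theorem~\ref{thm:simppoly}. You merely spell out details (the $(k+1)(|A|+|B|)$ per-pair count, the per-level summation, and the leaf contributions) that the paper's proof states more tersely, so no further changes are needed.
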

\begin{proof}
From Lemma~\ref{lem17}, the spanner constructed is $(k, 6+\epsilon, w)$-VFTSWP.
Let $S(n)$ be the number of edges in $G$.
Since there are $O(\sqrt{h})$ splitting line segments, the number of edges included into $G$ is $O(k n \sqrt{h} \lg{n})$.
At each internal node of the recursion tree of the algorithm, $S(n)$ satisfies the recurrence $S(n) = O(k n \sqrt{h} \lg{n}) + S(n_1) + S(n_2)$, where $n_1 + n_2 = n$ and $n_1, n_2 \geq n/3$.
The number of edges included at all the leaves of the recursion tree together is $O(k n \sqrt{h} (\lg n)^{2})$.
Hence, the total number of edges is $O(k n \sqrt{h} (\lg n)^{2})$.
(The plus $1$ under the square root is for considering the case in which the input polygonal domain has no holes, i.e., when the input polygonal domain is a simple polygon.)
\end{proof}

\section{Vertex fault-tolerant spanner for weighted points located on a polyhedral terrain}
\label{sect:terr}

Given a polyhedral terrain $\cal T$, a set $S$ of $n$ points located on $\cal T$, a weight function $w$ to associate a non-negative weight to each point in $S$, a positive integer $k$, and a real number $0 < \epsilon \le 1$, in this section we describe an algorithm to compute a geodesic $(k, 6+\epsilon)$-vertex fault-tolerant spanner with $O(kn\sqrt{n}(\lg{n})^2)$ edges for the set $S$ of weighted points.

We denote the boundary of $\mathcal{T}$ with $\partial \mathcal{T}$.
Also, we denote a geodesic Euclidean shortest path (a path lying on $\cal T$) between any two points $a, b \in \mathcal{T}$ with $\pi(a, b)$.
The distance along $\pi(a, b)$ is denoted by $d_\pi(a, b)$.
For any two points $x, y \in \partial \mathcal{T}$, we denote the closed region lying to the right (resp. left) of $\pi(x,y)$ when going from $x$ to $y$, including (resp. excluding) the points lying on the shortest path $\pi(x,y)$ with $\pi^{+}(x,y)$ (resp. $\pi^{-}(x,y)$).
The {\it geodesic projection} $p_\pi$ of a point $p \in {\cal T}$ on a shortest path $\pi$ between two points lying on $\mathcal{T}$ is defined as a point on $\pi$ that is at the minimum geodesic distance from $p$ among all the points of $\pi$.

For three points $u,v,w \in \mathcal{T}$, the closed region bounded by shortest paths $\pi(u,v)$, $\pi(v,w)$, and $\pi(w,u)$ is called an {\it sp-triangle}, denoted by $\Delta(u, v, w)$.
If points $u, v, w \in \mathcal{T}$ are clear from the context, we denote the sp-triangle with $\Delta$.
In the following, we restate a theorem from \cite{conf/soda/AbamBS17}, which is useful for our analysis.

\begin{theorem}
\label{thm7}
For any set $P$ of $n$ points on a polyhedral terrain $\mathcal{T}$, there exists a balanced sp-separator: a shortest path $\pi(u,v)$ connecting two points $u,v \in \partial \mathcal{T}$ such that $\frac{2n}{9} \leq |\pi^{+}(u,v) \cap P| \leq \frac{2n}{3}$, or a sp-triangle $\Delta$ such that $\frac{2n}{9} \leq |\Delta \cap P| \leq \frac{2n}{3}$.
\end{theorem}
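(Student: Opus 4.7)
The plan is to adapt the classical shortest-path separator technique of Lipton-Tarjan and Miller from planar graphs to the terrain setting, using the family of geodesics emanating from a fixed boundary point. Pick an arbitrary point $u \in \partial\mathcal{T}$, and for each $v \in \partial\mathcal{T}$ consider the shortest path $\pi(u,v)$ which partitions $\mathcal{T}$ into $\pi^{+}(u,v)$ and $\pi^{-}(u,v)$. The first attempt is to test whether any single such path is already a balanced separator: parameterize $v$ as it sweeps once around $\partial\mathcal{T}$ and track the integer-valued function $f(v) = |\pi^{+}(u,v) \cap P|$, which starts at $0$ and returns to $n$. If $f$ ever lands in $[2n/9,\,2n/3]$, the corresponding $\pi(u,v)$ is the promised sp-separator and we are done.

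Otherwise, $f$ must jump across the gap $(2n/9,\,2n/3)$ at some ``event'' parameter $v^{*}$ at which the geodesic $\pi(u,v)$ changes its combinatorial type. Equivalently, look at the deepest vertex $w$ of the geodesic tree $T_{u}$ rooted at $u$ whose geodesic subtree --- the set of terrain points whose shortest path to $u$ passes through $w$ --- contains more than $2n/3$ points of $P$. By minimality of $w$, every finer subregion below $w$ contains at most $2n/3$ points. Select two child geodesics of $w$ whose terminal boundary points $v_{1}, v_{2} \in \partial\mathcal{T}$ bracket the jump; closing the wedge they cut out with a third shortest path $\pi(v_{1},v_{2})$ produces an sp-triangle $\Delta(u, v_{1}, v_{2})$ (or, when $v_{1}=u$ or $v_{2}=u$, a degenerate one which reduces to a single-path separator for a sub-problem). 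Standard counting then gives $|\Delta \cap P|$ in the range $[2n/9,\,2n/3]$: the upper bound follows from the minimality of $w$ together with subadditivity of subtree sizes along the three bounding paths, and the lower bound follows because the selected wedge already accounts for $f(v_{2}) - f(v_{1}) > 2n/3 - 2n/9 = 4n/9$ points, and this count can be corrected by at most a controlled amount when the open boundary arc is replaced by $\pi(v_{1},v_{2})$.

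The main obstacle will be making the notion of ``geodesic tree'' and ``subtree of a vertex'' on a polyhedral terrain rigorous, since geodesics on $\mathcal{T}$ can share common prefixes and branch at ridges, saddle vertices, or shadow boundaries, so $T_{u}$ is not a combinatorial tree in the graph-theoretic sense. I would first apply a generic perturbation of $P$ so that all pairwise geodesics are unique, and then use the fact that the map $v \mapsto \pi(u,v)$ is continuous on $\partial\mathcal{T}$ away from a finite set of events where the combinatorial type of the geodesic changes. This reduces $f$ to a step function whose jumps correspond precisely to the formation of candidate sp-triangles; the balance inequalities then follow by a local analysis at the critical event $v^{*}$, together with a greedy choice of $w$ as the minimal heavy branching vertex in depth order along $T_{u}$.
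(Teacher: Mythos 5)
First, note that the paper you are working from does not prove this statement at all: Theorem~\ref{thm7} is imported verbatim from \cite{conf/soda/AbamBS17}, so there is no in-paper proof to compare against and your argument has to stand on its own (or reproduce the argument of that reference, which is indeed a shortest-path-separator construction in the Lipton--Tarjan/Miller spirit you invoke). Judged on its own, your sketch has the right skeleton --- sweep $v$ along $\partial\mathcal{T}$, and if the counting function $f(v)=|\pi^{+}(u,v)\cap P|$ skips the window $[2n/9,2n/3]$, extract a region bounded by two geodesics from $u$ plus a closing geodesic --- but the two steps that carry all the difficulty are asserted rather than proved. (1) The upper bound $|\Delta\cap P|\le 2n/3$: at a jump of $f$ the wedge between the two bracketing geodesics can contain almost all of $P$, so balance must come from your ``deepest heavy vertex $w$'' device; yet the region counted by the minimality of $w$ (the points whose geodesics to $u$ pass through $w$, bounded by two geodesic extensions of $w$ and a boundary arc) is not the region you output, which is the apex-$u$ sp-triangle $\Delta(u,v_1,v_2)$ and additionally contains everything between $u$ and $w$ enclosed by $\pi(u,v_1)$ and $\pi(u,v_2)$. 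Minimality of $w$ therefore does not directly bound $|\Delta\cap P|$ by $2n/3$. (2) The closing side $\pi(v_1,v_2)$ is a geodesic between two boundary points and can cut arbitrarily deep into the terrain; nothing in the proposal controls how many points of $P$ change sides when the boundary arc between $v_1$ and $v_2$ is replaced by $\pi(v_1,v_2)$. The phrase ``can be corrected by at most a controlled amount'' is precisely the claim that needs a proof, and both the lower bound $2n/9$ and the upper bound $2n/3$ can fail without it.

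A further technical slip: perturbing the point set $P$ does not give you the genericity you need, because the geodesics whose combinatorial changes drive the jumps of $f$ run from $u$ to boundary points $v\in\partial\mathcal{T}$; their non-uniqueness is governed by the cut locus of $u$ on $\mathcal{T}$ and is completely independent of $P$. So the reduction of $f$ to a step function with ``events'' has to be argued via the structure of shortest paths on a polyhedral surface (ridge/saddle crossings, cut-locus edges), not via perturbing $P$. In short, the high-level plan is reasonable and close in spirit to the construction actually used in \cite{conf/soda/AbamBS17}, but as written the balancing argument at the critical event --- the choice of $w$, the relation between the subtree of $w$ and the sp-triangle, and the accounting for the third side $\pi(v_1,v_2)$ --- is a genuine gap, and these are exactly the points where the real proof does its work.
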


Thus, an sp-separator is either bounded by a shortest path (in the former case), or by three shortest paths (in the latter case).
We present two divide-and-conquer based algorithms to compute VFTSWP for weighted points located on terrain.

As part of our first algorithm, a balanced sp-separator as given in Theorem~\ref{thm7} is computed. 
Let $\gamma$ be a shortest path that belongs to an sp-separator.
The sets $S_{in}$ and $S_{out}$ comprising points are defined as follows: if the sp-separator is a shortest path, then define $S_{in}$ to be $\gamma^{+}(u,v) \cap S$; otherwise, $S_{in}$ is $\Delta \cap S$; and, the set $S_{out}$ is defined as $S - S_{in}$.
For each point $p \in S$, we compute a projection $p_{\gamma}$ of $p$ on every shortest path $\gamma$ of sp-separator, and associate a weight $w(p) + d_{\mathcal{T}}(p, p_{\gamma})$ with $p_{\gamma}$.
Let $S_{\gamma}$ be the set $\cup_{p \in S}\hspace{0.02in} p_{\gamma}$.
For every point $p_\gamma \in S_\gamma$, we associate weight $w(p) + d_\pi(p, p_\gamma)$ to $p_\gamma$, where $p_\gamma$ is a geodesic projection of $p$ on to $\gamma$.
Suppose $G_\gamma$ is a $(k, t)$-VFTS for the weighted points located on $\gamma$.
For every edge $(p_{\gamma}, q_{\gamma}) \in G_{\gamma}$, we introduce an edge $(p,q)$ into $G$, where $p_{\gamma}$ (resp. $q_{\gamma}$) is the projection of $p$ (resp. $q$) on $\gamma$. 
More edges are added to $G$ while recursively processing points in sets $S_{in}$ and $S_{out}$.
Refer to Algorithm~\ref{algo:terr1}.

\begin{algorithm}[ht]
    \caption{VFTSWPTerrain1(${\cal T}, S, k, \epsilon$)}
    \label{algo:terr1}

    \SetKwInOut{Input}{Input}
    \SetKwInOut{Output}{Output}
    
    \Input{A triangulated polyhedral terrain $\cal T$, a set $S$ on $n$ points located on $\cal T$, a weight function $w$ that associates a non-negative weight to each point in $S$, an integer $k \geq 1$, and a real number $0 < \epsilon \le 1$.}
    \Output{A $(k, 3t+\epsilon, w)$-VFTSWP $G$, where $t$ is the spanning ratio of $G_\gamma$.}
    
	\BlankLine

    \While{$|\mathcal{T} \cap S| \geq 1$}{
        
	\BlankLine

        Compute a balanced sp-separator $\Gamma$ for $\mathcal{T}$ using the algorithm given in \cite{conf/soda/AbamBS17}. \\
        
	\BlankLine

        \ForEach{shortest path $\gamma$ that is bounding $\Gamma$}{
			
		\BlankLine

		For each point $p$ that lies on $\cal T$, compute a geodesic projection $p_\gamma$ of $p$ on $\gamma$ and associate weight $w(p)+d_\pi(p, p_\gamma)$ to $p_\gamma$. 
		Let $S_\gamma$ be the set of points resultant from these projections. \\

		\BlankLine

		Compute a $(k, t)$-VFTS $G_\gamma$ for the set $S_\gamma$ of weighted points in $S$.

		\BlankLine

		For every edge $(p_\gamma, q_\gamma) \in G_\gamma$, add an edge $(p, q)$ to $G$, where $p_\gamma$ (resp. $q_\gamma$) is a geodesic projection of $p$ (resp. $q$) on $l$.
on l.
		\BlankLine

        }

	\BlankLine

        \scriptsize{}
        Let ${\cal T}'$ be $\pi^{+}(u,v)$ if the balanced sp-separator is a shortest path $\pi(u,v)$; otherwise, let ${\cal T}'$ be $\Delta$.
        Also, let $S_{in}$ be the set of points in $S$ located on the polyhedral terrain ${\cal T}'$. \\
        \normalsize{}

	\BlankLine

        VFTSWPTerrain(${\cal T}', S_{in}, k, \epsilon$). \\
    
	\BlankLine

        \scriptsize{}
        Let ${\cal T}''$ be $\pi^{-}(u,v)$ if the balanced sp-separator is a shortest path $\pi(u,v)$; otherwise, let ${\cal T}''$ be ${\cal T} \setminus \Delta$.
        Also, let $S_{out}$ be the set of points in $S$ located on the polyhedral terrain ${\cal T}''$. \\
        \normalsize{}

	\BlankLine

        VFTSWPTerrain(${\cal T}'', S_{out}, k, \epsilon$). 

	\BlankLine

    }
        
\end{algorithm}

\begin{lemma}
\label{lem20}
Given a set $S$ of $n$ points located on a polyhedral terrain $\cal T$, a weight function $w$ to associate a non-negative weight to each point in $S$, a positive integer $k$, and a real number $\epsilon > 0$, Algorithm~\ref{algo:terr1} computes a $(k, 3t)$-vertex fault tolerant geodesic spanner for the weighted points in $S$.
Here, $t$ is the spanning ratio of $(k, t)$-VFTS constructed in Algorithm~\ref{algo:terr1} for any set of points located on any shortest path belonging to $\cal T$.
\end{lemma}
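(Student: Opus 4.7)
The plan is to mirror the arguments of Lemma~\ref{lem5} and Lemma~\ref{lem17}, adapted to the terrain setting and leveraging the hypothesis that a $(k,t)$-VFTSWP $G_\gamma$ is available on every bounding shortest path $\gamma$ of an sp-separator. Fix any $S'\subset S$ with $|S'|\le k$ and any $p,q\in S\setminus S'$. First I would descend the recursion tree of Algorithm~\ref{algo:terr1} to the first level at which $p$ and $q$ lie in different pieces of the balanced sp-separator produced by Theorem~\ref{thm7}; at this level some bounding shortest path $\gamma$ of that separator is crossed by a geodesic shortest path $\pi(p,q)$. Let $r\in\gamma\cap\pi(p,q)$ be a crossing point and let $p_\gamma,q_\gamma$ be the geodesic projections computed by the algorithm, equipped with the weights $w'(z)=w(y)+d_\pi(y,z)$.

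Second, I would use the fault-tolerance of $G_\gamma$. Setting $S'_\gamma=\{y_\gamma : y\in S'\}$, we have $|S'_\gamma|\le k$, and $p_\gamma,q_\gamma\notin S'_\gamma$ because $p,q\notin S'$. Hence there is a path $p_\gamma=z_0,z_1,\ldots,z_m=q_\gamma$ in $G_\gamma\setminus S'_\gamma$ of total cost at most $t\cdot d_{w'}(p_\gamma,q_\gamma)$. Writing $y_i$ for the preimage of $z_i$, the correspondence in the algorithm produces the edges $(y_i,y_{i+1})\in G$ for each $i$, and $z_i\notin S'_\gamma$ implies $y_i\notin S'$, so the sequence $p=y_0,y_1,\ldots,y_m=q$ is a bona fide path in $G\setminus S'$.

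Third, I would transfer the cost bound from $G_\gamma$ to $G$ edge by edge. The key inequality, obtained from the geodesic triangle inequality on $\mathcal{T}$ and the definition of $w'$, is
\[
d_w(y_i,y_{i+1}) \;\le\; \bigl[w(y_i)+d_\pi(y_i,z_i)\bigr] + d_\pi(z_i,z_{i+1}) + \bigl[d_\pi(z_{i+1},y_{i+1})+w(y_{i+1})\bigr] \;=\; d_{w'}(z_i,z_{i+1}).
\]
Summing over the path and invoking the $(k,t)$-VFTSWP bound on $G_\gamma$ gives $d_{G\setminus S'}(p,q)\le t\cdot d_{w'}(p_\gamma,q_\gamma)$. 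It then remains to compare $d_{w'}(p_\gamma,q_\gamma)$ to $d_w(p,q)$. By the definition of geodesic projection $d_\pi(p,p_\gamma)\le d_\pi(p,r)$ and $d_\pi(q,q_\gamma)\le d_\pi(q,r)$, and the optimal-substructure property of $\pi(p,q)$ at $r$ gives $d_\pi(p,r)+d_\pi(r,q)=d_\pi(p,q)$, so $d_\pi(p,p_\gamma)+d_\pi(q,q_\gamma)\le d_\pi(p,q)$. A second application of the triangle inequality then yields $d_\pi(p_\gamma,q_\gamma)\le d_\pi(p_\gamma,p)+d_\pi(p,q)+d_\pi(q,q_\gamma)\le 2\,d_\pi(p,q)$. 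Substituting back gives $d_{w'}(p_\gamma,q_\gamma)\le w(p)+w(q)+3\,d_\pi(p,q)\le 3\,d_w(p,q)$, and chaining the two bounds yields $d_{G\setminus S'}(p,q)\le 3t\cdot d_w(p,q)$.

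The main obstacle I anticipate is the edge-lifting step, where the cost of the path in $G$ must be paid for by the cost of the corresponding path in $G_\gamma$. The construction succeeds only because the weight $w'(z_i)$ attached to each projection absorbs simultaneously the original point weight $w(y_i)$ and the projection detour $d_\pi(y_i,z_i)$, so the triangle-inequality slack that appears when an edge on $\gamma$ is replaced by an edge on the terrain is exactly compensated inside $d_{w'}$. A secondary point of care is the separator shape: if the balanced sp-separator is an sp-triangle rather than a single shortest path, then $\gamma$ should be chosen as any one of the three bounding shortest paths that $\pi(p,q)$ crosses; the remainder of the argument is identical.
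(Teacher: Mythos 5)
Your proposal is correct and follows essentially the same route as the paper's proof: locate the recursion level whose sp-separator path $\gamma$ is crossed by $\pi(p,q)$ at a point $r$, lift a fault-tolerant path of $G_\gamma$ between the weighted projections back to $G\setminus S'$, absorb the projection detours into the weights $w'(z)=w(y)+d_\pi(y,z)$ via the geodesic triangle inequality, and bound $d_\pi(p_\gamma,q_\gamma)\le 2\,d_\pi(p,q)$ using $d_\pi(p,p_\gamma)\le d_\pi(p,r)$ and $d_\pi(q,q_\gamma)\le d_\pi(q,r)$, which yields $d_{G\setminus S'}(p,q)\le 3t\cdot d_w(p,q)$. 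Your explicit introduction of the removed projection set $S'_\gamma$ is a minor clarification of a step the paper leaves implicit, not a different argument.
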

\begin{proof}
Using induction on the number of points, we show that there exists a $3t$-spanner path between $p$ and $q$ in $G \setminus S'$ for any set $S'$ of vertices of $G$ with $|S'| \le k$.
The induction hypothesis assumes $d_{G \setminus S'}(p, q) \le 3t \cdot d_w(p, q)$ for any set $S'$ of vertices of $G$ with $|S'| \le k$ and for any two points $p, q$ in $G \setminus S'$.
Consider a set $S' \subset S$ such that $|S'| \leq k$.
Let $p, q$ be two arbitrary points in $S \setminus S'$.
For any shortest path $\pi(p, q)$ between $p$ and $q$, there exists a separator $\gamma$ such that $\pi(p, q)$ intersects $\gamma$.
Let $r$ be a point of intersection of $\pi(p, q)$ and $\gamma$.
Also, let $p_\gamma$ (resp. $q_\gamma$) be a geodesic projection of $p$ (resp. $q$) on $l$.
Since $G_\gamma$ is a $(k, t)$-VFTSWP, there exists a path $\tau$ between $p_l$ and $q_l$ in $G_l$ with length at most $t d_w(p_l, q_l)$.
By replacing each vertex $x_\gamma$ of $\tau$ by $x \in S$ such that $x_\gamma$ is the projection of $x$ on $\gamma$, gives a path between $p$ and $q$ in $G \setminus S'$. 
Thus, the length of the path $d_{G \setminus S'}(p,q)$ is less than or equal to the length of path $\tau$ in $G_\gamma$.
For every $x,y \in S \setminus S'$,
{\setlength{\abovedisplayskip}{0pt}
\begin{flalign}
\hspace{6mm}d_{G \setminus S'}(p,q) &\le \sum_{x_\gamma,y_\gamma \in \tau} d_w(x,y)&&\nonumber\\
    &= \sum_{x_\gamma,y_\gamma \in \tau} (w(x) + d_{\pi}(x,y) + w(y))&&\nonumber\\
    &\leq \sum_{x_\gamma,y_\gamma \in \tau} (w(x) + d_{\pi}(x,x_\gamma) + d_{\pi}(x_\gamma,y_\gamma) + d_{\pi}(y_\gamma,y) + w(y))&&\nonumber\\
        &\text{[by triangle inequality of geodesic shortest paths]}&&\nonumber\\
    &= \sum_{x_\gamma,y_\gamma \in \tau} (w(x_\gamma) + d_{\pi}(x_\gamma,y_\gamma) + w(y_\gamma))&&\nonumber\\
        &\text{[since the weight associated with projection} \ z_\gamma \ \text{of every point} \ z \ \text{is} \ w(z) + d_{\pi}(z,z_\gamma)]&&\nonumber\\
    	&\leq t \cdot d_w(p_\gamma,q_\gamma)&&\label{simppoly:lweq}
\end{flalign}}

{\setlength{\abovedisplayskip}{0pt}
\begin{flalign}
\hspace{1in}
        &\text{[since} \ G_\gamma \ \text{is a} \ (k, t)\text{-vertex fault-tolerant geodesic spanner]}&&\nonumber\\
        &= t \cdot [w(p_\gamma) + d(p_\gamma,q_\gamma) + w(q_\gamma)]&&\nonumber\\
        &= t \cdot [w(p_\gamma) + d_{\pi}(p_\gamma,q_\gamma) + w(q_\gamma)]&&\nonumber\\
    &\text{[since points $p_\gamma$ and $q_\gamma$ are located on $l$]}&&\nonumber\\
        &= t \cdot [w(p) + d_{\pi}(p,p_\gamma) + d_{\pi}(p_\gamma,q_\gamma) + d_{\pi}(q_\gamma,q) + w(q)]\label{eq20}&&\nonumber\\
    &\text{[since the weight associated with } \ z_\gamma \ \text{of point} \ z \ \text{is} \ w(z) + d_{\pi}(z,z_\gamma) \text{]}&&\nonumber\\
    &\leq t \cdot [w(p) + d_{\pi}(p,r) + d_{\pi}(p_\gamma,q_\gamma) + d_{\pi}(r,q) + w(q)]&&\nonumber\\
    &\text{[since $r$ is a point belonging to both $l$ as well as $\pi(p, q)$]}&&\nonumber\\
	&= t \cdot [w(p) + d_{\pi}(p, q) + w(q) + d_{\pi}(p_\gamma,q_\gamma)]&&\nonumber\\
    &= t \cdot [d_w(p, q) + d_{\pi}(p_\gamma,q_\gamma)].&&\nonumber
\end{flalign}}

\noindent
But, $d_{\pi}(p_\gamma,q_\gamma)$
{\setlength{\abovedisplayskip}{0pt}
\begin{flalign}
\hspace{0.4in}
	&\leq d_{\pi}(p_\gamma,p) + d_{\pi}(p,q) + d_{\pi}(q,q_\gamma)&&\nonumber\\
	&\text{[by triangle inequality of geodesic shortest paths]}&&\nonumber\\
        &\leq d_{\pi}(p,r) + d_{\pi}(p,q) + d_{\pi}(r,q)&&\nonumber\\
	&\text{[since $r$ belongs to both $l$ and $\pi(p,q)$]}&&\nonumber\\
        &\leq w(p) + d_{\pi}(p,r) + w(p) + d_{\pi}(p,q) + w(q) + d_{\pi}(r,q) + w(q)&&\nonumber\\
        &\text{[since weight associated with every point is non-negative]}&&\nonumber\\
    &\le 2d_w(p,q)&&\nonumber\\
    &\text{[since $r$ belongs to $\pi(p, q)$].}&\nonumber
\end{flalign}}

Hence, $d_{G \setminus S'}(p,q) \leq 3t \cdot d_w(p,q)$.
\end{proof}

From Theorem~1 in \cite{conf/caldam/Inkulu19b}, there exists a $k$-vertex fault tolerant spanner for the set $S_\gamma$ of weighted points with multiplicative stretch $4+\epsilon$. 
This immediately leads to the following theorem.

\begin{theorem}
\label{thm:terr1}
Given a set $S$ of $n$ points located on a polyhedral terrain $\cal T$, a weight function $w$ to associate a non-negative weight to each point in $S$, a positive integer $k$, and a real number $0 < \epsilon \le 1$, there exists a $(k, 12+\epsilon, w)$-vertex fault tolerant geodesic spanner with $O(\frac{k n}{\epsilon^2} \lg{n})$ edges for the weighted points in $S$.
\end{theorem}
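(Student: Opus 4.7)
The plan is a direct combination of Lemma~\ref{lem20} with the $(k, 4+\epsilon, w)$-VFTSWP construction of Bhattacharjee and Inkulu \cite{conf/caldam/Inkulu19b}. The key observation is that the set $S_\gamma$ of geodesic projections onto a separating shortest path consists of weighted points embedded in $\mathbb{R}^3$, so their black-box construction can be invoked at every level of the recursion in Algorithm~\ref{algo:terr1}.

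For correctness, I would instantiate the subroutine ``compute a $(k,t)$-VFTS $G_\gamma$'' inside Algorithm~\ref{algo:terr1} with the Bhattacharjee-Inkulu construction called with slack parameter $\epsilon' := \epsilon/3$. This produces a $(k, 4+\epsilon/3, w)$-VFTSWP on $S_\gamma$, i.e.\ $t = 4 + \epsilon/3$. Lemma~\ref{lem20} then immediately certifies that the resulting graph $G$ is a $(k, 3t, w)$-VFTSWP, and since $3t = 3(4+\epsilon/3) = 12+\epsilon$, the correctness half of the theorem follows.

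For the edge count, the Bhattacharjee-Inkulu spanner on $|S_\gamma| \le m$ projected weighted points has size $O\!\left(\tfrac{k\,|S_\gamma|}{(\epsilon/3)^{2}}\lg |S_\gamma|\right) = O\!\left(\tfrac{k m}{\epsilon^{2}}\lg m\right)$, and by Theorem~\ref{thm7} each balanced sp-separator consists of at most three shortest paths, so this bound absorbs the constant number of separators handled in one recursive call. Letting $E(m)$ denote the total number of edges produced by the algorithm on $m$ input points, the divide-and-conquer yields
\begin{equation*}
E(m) \,\le\, E(n_{1}) + E(n_{2}) + O\!\left(\tfrac{k m}{\epsilon^{2}}\lg m\right),
\end{equation*}
where $n_{1}+n_{2} \le m$ and $n_{1}, n_{2} \le \tfrac{2m}{3}$. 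Unrolling this recurrence gives the claimed $O(\tfrac{k n}{\epsilon^{2}}\lg n)$ bound on the total number of edges.

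The main piece requiring attention is the recurrence analysis rather than any conceptual obstacle: correctness is essentially for free from Lemma~\ref{lem20}, and the only nontrivial bookkeeping is verifying that the additive term in the recursion collapses in the manner claimed when summed along the geometrically shrinking recursion tree of depth $O(\lg n)$ guaranteed by Theorem~\ref{thm7}. Beyond these two cited black boxes and the standard recurrence solution, no new ingredients are needed.
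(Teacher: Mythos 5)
Your stretch-factor argument is exactly the paper's (very terse) proof: instantiate the $(k,t)$-VFTS subroutine of Algorithm~\ref{algo:terr1} with the construction of \cite{conf/caldam/Inkulu19b} at slack $\epsilon/3$ and apply Lemma~\ref{lem20} to get $3t=12+\epsilon$; that part is fine, and indeed more careful than the paper about the rescaling. One caveat, though: you justify the black-box call by saying $S_\gamma$ is ``embedded in $\mathbb{R}^3$.'' Lemma~\ref{lem20} needs $G_\gamma$ to be a fault-tolerant spanner for the weighted points of $S_\gamma$ with respect to the \emph{geodesic} distances between points of $\gamma$ (which, since $\gamma$ is a shortest path, are the arclength distances along $\gamma$), not with respect to ambient Euclidean distances in $\mathbb{R}^3$; a $t$-spanner for the Euclidean metric does not bound the sums of $d_\pi$-weighted edge lengths used in the proof of Lemma~\ref{lem20}, because $d_\pi$ can exceed the chord length. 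The correct invocation is to parametrize $\gamma$ isometrically by arclength, so that $S_\gamma$ becomes a set of weighted points on a line ($d=1$), and apply the $\mathbb{R}^d$ result of \cite{conf/caldam/Inkulu19b} there.

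The genuine gap is in the size analysis. You charge each recursive call $O\bigl(\tfrac{k m}{\epsilon^{2}}\lg m\bigr)$ edges (this is the simple-polygon bound of \cite{conf/caldam/Inkulu19b}, not the bound relevant here) and then claim the recurrence $E(m)\le E(n_1)+E(n_2)+O\bigl(\tfrac{km}{\epsilon^2}\lg m\bigr)$ with $n_1+n_2\le m$, $n_1,n_2\le 2m/3$ unrolls to $O\bigl(\tfrac{kn}{\epsilon^2}\lg n\bigr)$. It does not: the subproblem sizes at each level of the recursion tree sum to $\Theta(n)$, the additive cost per level is $\Theta\bigl(\tfrac{kn}{\epsilon^2}\lg n\bigr)$ for a constant fraction of the $\Theta(\lg n)$ levels, so this recurrence solves to $\Theta\bigl(\tfrac{kn}{\epsilon^2}(\lg n)^2\bigr)$, which misses the theorem's bound by a $\lg n$ factor. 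To obtain $O\bigl(\tfrac{kn}{\epsilon^2}\lg n\bigr)$ you need the spanner built on each separator path to have size \emph{linear} in the number of projected points (up to the $k/\epsilon^{O(1)}$ factor), which is exactly what the one-dimensional (points-on-a-line) instantiation of Theorem~1 of \cite{conf/caldam/Inkulu19b} provides; then each of the $O(\lg n)$ levels guaranteed by the balanced sp-separator of Theorem~\ref{thm7} contributes $O\bigl(\tfrac{kn}{\epsilon^2}\bigr)$ edges, and the claimed total follows. As written, your argument only establishes an $O\bigl(\tfrac{kn}{\epsilon^2}(\lg n)^2\bigr)$ edge bound.
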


Next, we present another algorithm (Algorithm~\ref{algo:terr2} below) to compute a spanner for a set of weighted points on a polyhedral terrain.
The spanner constructed by this algorithm has better stretch factor and its size does not depend on $\epsilon$. 
We note the number of edges in this spanner is $O(kn(\lg{n})^2)$ in contrast to $O(\frac{kn \lg{n}}{\epsilon^2})$ number of edges in the spanner constructed by Algorithm~\ref{algo:terr1}.
This algorithm mainly modifies lines $5$-$6$ of Algorithm~\ref{algo:terr1}.
First, a balanced sp-separator $\Gamma$ as given in Theorem~\ref{thm7} is computed.
For each shortest path $\gamma$ in $\Gamma$, for every point $p \in S$, we compute a geodesic projection $p_{\gamma}$ of $p$ on $\gamma$, and associate a weight $w(p) + d_{\mathcal{T}}(p, p_{\gamma})$ with $p_{\gamma}$.
Further, we compute a $\frac{8}{\epsilon}$-SSPD $\cal S$ for points projected on $\gamma$.
Using Algorithm~\ref{algo:addedgessspd}, like in algorithms for points in simple polygon and for points in polygonal domain, we introduce edges into spanner network being constructed $G$ based on $\cal S$.
The sets $S_{in}$ and $S_{out}$ comprising points are defined as follows: if the sp-separator is a shortest path $\gamma$, then $S_{in}$ is the set of points located in $\gamma^{+}(u,v) \cap S$; otherwise, $S_{in}$ is $\Delta \cap S$; and, the set $S_{out} = S - S_{in}$.
More edges are included in $G$ while recursively processing points in sets $S_{in}$ and $S_{out}$.
Refer to Algorithm~\ref{algo:terr2}.

\begin{algorithm}[ht]
    \caption{VFTSWPTerrain2(${\cal T}, S, k, \epsilon$)}
    \label{algo:terr2}

    \SetKwInOut{Input}{Input}
    \SetKwInOut{Output}{Output}
    
    \Input{A triangulated polyhedral terrain $\cal T$, a set $S$ on $n$ points located on $\cal T$, a weight function $w$ that associates a non-negative weight to each point in $S$, an integer $k \geq 1$, and a real number $0 < \epsilon \le 1$.}
    \Output{A $(k, 6+\epsilon, w)$-VFTSWP $G$.}
    
	\BlankLine

    \While{$|\mathcal{T} \cap S| \geq 1$}{
        
	\BlankLine

	Using the algorithm given in \cite{conf/soda/AbamBS17}, compute a balanced sp-separator $\Gamma$ for $\mathcal{T}$. \\
        
	\BlankLine

        \ForEach{shortest path $\gamma$ that is bounding $\Gamma$}{
			
		\BlankLine

		For each point $p$ that lies on $\cal T$, compute a geodesic projection $p_\gamma$ of $p$ on $\gamma$. 
		Let $S_\gamma$ be the set of points resultant from these projections. \\

		\BlankLine

		Using the algorithm from \cite{journals/dcg/ABFG09}, compute a $\frac{8}{\epsilon}$-SSPD $\cal S$ for points in $S_\gamma$. \\

		\BlankLine

		IncludeEdgesUsingSSPD(${\cal S}, G$).  \scriptsize{} (Refer to Algorithm~\ref{algo:addedgessspd}.) \normalsize{}

		\BlankLine

        }

	\BlankLine

        \scriptsize{}
        Let ${\cal T}'$ be $\pi^{+}(u,v)$ if the balanced sp-separator is a shortest path $\pi(u,v)$; otherwise, let ${\cal T}'$ be $\Delta$.
        Also, let $S_{in}$ be the set of points in $S$ located on the polyhedral terrain ${\cal T}'$. \\
        \normalsize{}

	\BlankLine

        VFTSWPTerrain(${\cal T}', S_{in}, k, \epsilon$). \\
    
	\BlankLine

        \scriptsize{}
        Let ${\cal T}''$ be $\pi^{-}(u,v)$ if the balanced sp-separator is a shortest path $\pi(u,v)$; otherwise, let ${\cal T}''$ be ${\cal T} \setminus \Delta$.
        Also, let $S_{out}$ be the set of points in $S$ located on the polyhedral terrain ${\cal T}''$. \\
        \normalsize{}

	\BlankLine

        VFTSWPTerrain(${\cal T}'', S_{out}, k, \epsilon$). 

	\BlankLine

    }
        
\end{algorithm}

The following lemma shows $G$ is a $(k, 6+\epsilon, w)$-VFTSWP for weighted points in $S$.

\begin{lemma}
Algorithm~\ref{algo:terr2} computes a $(k, 6+\epsilon)$-vertex fault tolerant geodesic spanner $G$ for the weighted points in $S$.
\end{lemma}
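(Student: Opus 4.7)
The plan is to mirror the proof of Lemma~\ref{lem17} verbatim, with the splitting line segment $\ell$ replaced by a bounding shortest path $\gamma$ of the sp-separator $\Gamma$, and Euclidean distances between projections on $\ell$ replaced by geodesic distances between projections on $\gamma$. The key observation that makes this substitution work is that for any two points $p_\gamma, q_\gamma$ lying on the shortest path $\gamma$, the optimal-substructure property of geodesic paths gives $d_\pi(p_\gamma, q_\gamma)$ equal to the length of the sub-arc of $\gamma$ between them; hence the $\frac{8}{\epsilon}$-SSPD on $S_\gamma$ provides the analogue of the Euclidean SSPD bound used previously.

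First I would fix an arbitrary $S' \subset S$ with $|S'| \le k$ and two points $p, q \in S \setminus S'$. By Theorem~\ref{thm7} together with the recursive structure of Algorithm~\ref{algo:terr2}, there is a level of the recursion at which some shortest path $\gamma$ bounding the sp-separator $\Gamma$ lies between $p$ and $q$ on the terrain, so that a geodesic shortest path $\pi(p,q)$ crosses $\gamma$ at some point $r$. Let $p_\gamma, q_\gamma$ denote geodesic projections of $p, q$ on $\gamma$, and select a pair $(A,B)$ in the $\frac{8}{\epsilon}$-SSPD $\cal S$ of $S_\gamma$ with $p_\gamma \in A$, $q_\gamma \in B$ (assume without loss of generality that $A$ has the smaller radius). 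If $|A| < k+1$, then IncludeEdgesUsingSSPD inserts the edge $(p, q)$ directly into $G$, so $d_{G \setminus S'}(p, q) = d_w(p, q)$.

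In the remaining case $|A| \ge k+1$, the set $A'$ consists of the $k+1$ points of $A$ minimizing $w(p) + d_\pi(p, p_\gamma)$, so since $|S'| \le k$ there exists $c_j \in A' \setminus S'$ with both $(p, c_j)$ and $(c_j, q)$ in $G$. I would then reproduce the exact chain of inequalities from Lemma~\ref{lem17}, invoking (a) the triangle inequality for the geodesic metric $d_\pi$ on $\cal T$; (b) $d_\pi(p_\gamma, c_{j_\gamma}) \le \frac{\epsilon}{4}\, d_\pi(p_\gamma, q_\gamma)$ from the SSPD separation property; (c) $w(c_j) + d_\pi(c_{j_\gamma}, c_j) \le w(p) + d_\pi(p, p_\gamma)$ from the selection of $A'$; and (d) $d_\pi(p, p_\gamma) + d_\pi(q, q_\gamma) \le d_\pi(p, r) + d_\pi(r, q) = d_\pi(p, q) \le d_w(p, q)$, since $p_\gamma$ (resp.\ $q_\gamma$) is the nearest point of $\gamma$ to $p$ (resp.\ $q$) and $r$ lies on $\gamma$. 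Summing as in Lemma~\ref{lem17} produces the three contributions $3\,d_w(p, q)$ from the weighted-projection detour, $d_w(p, q) + (\frac{\epsilon}{2} + 2)\,d_w(p, q)$ from the shortcut through $c_j$, and $\frac{\epsilon}{2}\,d_w(p, q)$ from the SSPD slack, totaling $(6 + \epsilon)\,d_w(p, q)$.

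I do not expect a genuine obstacle, since every Euclidean triangle inequality on $\ell$ used in Lemma~\ref{lem17} has a direct geodesic analogue on $\gamma$; the only place where care is needed is to check that the SSPD inequality $d_\pi(p_\gamma, c_{j_\gamma}) \le \frac{\epsilon}{4}\, d_\pi(p_\gamma, q_\gamma)$ holds with the geodesic distance on $\gamma$ rather than Euclidean distance in the plane, which is precisely what one gets from applying the $\frac{8}{\epsilon}$-SSPD construction to $S_\gamma$ viewed as a one-dimensional point set along the arc $\gamma$. Given this, the computation is a routine transcription of the polygonal-domain argument and yields the claimed $(k, 6 + \epsilon, w)$-VFTSWP guarantee.
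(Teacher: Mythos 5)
Your proposal is correct and follows essentially the same route as the paper, whose entire proof of this lemma is the single remark that the argument is the same as that of Lemma~\ref{lem17}; you simply carry out that transcription explicitly, replacing the splitting segment $\ell$ by a bounding shortest path $\gamma$ of the sp-separator and Euclidean distances between projections by geodesic (sub-arc) distances along $\gamma$. Your added observation that sub-paths of the shortest path $\gamma$ are themselves shortest, so that $S_\gamma$ behaves as a one-dimensional point set for the $\frac{8}{\epsilon}$-SSPD, is exactly the implicit justification the paper relies on.
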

\begin{proof}
The proof is same as the proof of Lemma~\ref{lem17}.
\end{proof}

\begin{theorem}
\label{thm:terr2}
Given a set $S$ of $n$ points located on a polyhedral terrain $\cal T$, a weight function $w$ to associate a non-negative weight to each point in $S$, a positive integer $k$, and a real number $0 < \epsilon \le 1$, there exists a $(k, 6+\epsilon, w)$-vertex fault tolerant geodesic spanner with $O(k n (\lg{n})^2)$ edges for the weighted points in $S$.
\end{theorem}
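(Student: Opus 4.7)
The plan is to piggy-back on the structural work already done in Section~\ref{sect:polydom}. Correctness of the stretch factor is taken care of by the preceding lemma, whose proof mirrors that of Lemma~\ref{lem17}: the only role of the polygonal domain there was to support geodesic projections, a balanced separator with $O(1)$ bounding pieces, and a $\frac{8}{\epsilon}$-SSPD on projected points, and all three ingredients are equally available on a polyhedral terrain (Theorem~\ref{thm7} in place of the planar-separator Theorem~\ref{thm4}, and the same SSPD construction from \cite{journals/dcg/ABFG09}). So the first step of my proof is simply to invoke the preceding lemma to conclude that $G$ is a $(k,6+\epsilon,w)$-VFTSWP.

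The real content is the edge-count bound, and the plan is to set up a recurrence on the divide-and-conquer tree of Algorithm~\ref{algo:terr2}. At a node handling $n$ input points, the balanced sp-separator $\Gamma$ consists of at most three shortest paths (either one shortest path or the three sides of an sp-triangle, per Theorem~\ref{thm7}). For each such $\gamma$, all $n$ points are projected onto $\gamma$ to form $S_\gamma$, and a $\frac{8}{\epsilon}$-SSPD $\cal S$ is built on $S_\gamma$. Using the weight bound $\sum_{(A,B)\in{\cal S}}(|A|+|B|)=O(n\lg n)$ for an SSPD of $n$ points and inspecting Algorithm~\ref{algo:addedgessspd}, the number of edges added for a single pair $(A,B)$ with $|A|\le|B|$ is at most $(k+1)(|A|+|B|)$ (either $|A|\cdot|B|\le (k+1)|B|$ when $|A|<k+1$, or $(k+1)(|A|+|B|)$ through the set $A'$). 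Summing over all pairs in $\cal S$ and over the $O(1)$ separator pieces, the current recursion node contributes $O(kn\lg n)$ edges.

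Let $S(n)$ denote the size of the spanner built on $n$ points. Theorem~\ref{thm7} guarantees that the inside set $S_{\text{in}}$ and the outside set $S_{\text{out}}$ each contain at most $\frac{2n}{3}$ (and at least $\frac{2n}{9}$) of the input points, so the recurrence is
\begin{equation*}
S(n)\;\le\; S(n_1)+S(n_2)+c\,kn\lg n,\qquad n_1+n_2=n,\quad n_1,n_2\le \tfrac{2n}{3},
\end{equation*}
with $S(1)=0$. Standard analysis of the recursion tree (depth $O(\lg n)$, total work $O(kn\lg n)$ per level) yields $S(n)=O(kn(\lg n)^2)$, which is the claimed bound.

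I do not expect any step above to be a genuine obstacle: the separator theorem from \cite{conf/soda/AbamBS17}, the SSPD weight bound from \cite{journals/dcg/ABFG09}, and the stretch analysis from Lemma~\ref{lem17} all lift to the terrain setting essentially verbatim, since they rely only on the triangle inequality of geodesic distances and the existence of geodesic projections onto $\gamma$. The only thing to double-check carefully is that $|A|+|B|$ (not $|A|\cdot|B|$) is the correct per-pair edge count in Algorithm~\ref{algo:addedgessspd} when $|A|<k+1$; this is where the assumption $|A|\le|B|$ combined with $|A|\le k$ turns the product into a $(k+1)|B|\le (k+1)(|A|+|B|)$ bound, preserving the $O(n\lg n)$ SSPD weight and hence the final $O(kn(\lg n)^2)$ size.
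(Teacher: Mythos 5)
Your proposal is correct and follows essentially the same route as the paper: the stretch factor is delegated to the lemma whose proof mirrors Lemma~\ref{lem17}, and the size bound comes from the same recurrence driven by the $O(n\lg n)$ SSPD weight, $O(k(|A|+|B|))$ edges per pair, the constant number of separator paths, and the balanced split from Theorem~\ref{thm7} (the paper states this only by reference to the proof of Theorem~\ref{thm:simppoly}, so you have merely made its implicit details explicit). The only nit is that Theorem~\ref{thm7} bounds the larger side by $7n/9$ rather than $2n/3$ of the points, which changes nothing in the $O(kn(\lg n)^2)$ conclusion.
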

\begin{proof}
The proof is same as the proof of Theorem~\ref{thm:simppoly}.
\end{proof}

\section*{Conclusions}
\label{sect:conclu}

Our first algorithm computes a $k$-vertex fault-tolerant spanner with stretch $\sqrt{10} + \epsilon$ for weighted points located in a simple polygon.
Our second algorithm computes a $k$-vertex fault-tolerant spanner with stretch $6 + \epsilon$ for weighted points located in the free space of a polygonal domain.
And, our third algorithm computes a $k$-vertex fault-tolerant spanner with stretch $6 + \epsilon$ for weighted points located on a polyhedral terrain.
It would be interesting to achieve a better bound on the stretch factor for the case of each point is unit or zero weighted.
Apart from the efficient computation, it would be interesting to explore the lower bounds on the number of edges for these problems.
Besides, the future work in the context of spanners for weighted points could include finding the relation between vertex-fault tolerance and edge-fault tolerance and optimizing various spanner parameters, like degree, diameter, and weight.

\subsection*{Acknowledgement}

This research of R. Inkulu is in part supported by SERB MATRICS grant MTR/2017/000474.

\bibliographystyle{plain}

\end{document}